\documentclass[lettersize,10pt,journal]{IEEEtran}
\usepackage{amsmath,amsfonts}
\usepackage{algorithmic}
\usepackage{array}
\usepackage[caption=false,font=normal,labelfont=rm,textfont=rm]{subfig}
\usepackage{textcomp}
\usepackage{multirow}
\usepackage{stfloats}
\usepackage{url}
\usepackage{verbatim}
\usepackage{graphicx} 

\usepackage{subfloat}
\usepackage{algorithm, algorithmic}
\usepackage{amssymb}
\usepackage{amsmath,amsfonts,amsthm,bm}
\usepackage{makecell}
\usepackage[justification=centering]{caption}
\usepackage{cite}
\usepackage{soul}
\hyphenation{op-tical net-works semi-conduc-tor IEEE-Xplore}
\usepackage{balance}
 \newtheorem{theorem}{Proposition}
\usepackage[colorlinks,linkcolor=magenta,anchorcolor=magenta,citecolor=magenta,bookmarks=true]{hyperref}  
 
\usepackage{epstopdf,amsthm,stfloats,siunitx,amssymb,wasysym,algorithm,algorithmic,array,url, color} 
  
\begin{document}
\title{Range-Angle Estimation for FDA-MIMO System With Frequency Offset}
\author{Mengjiang~Sun,~\IEEEmembership{Student~Member,~IEEE,} Peng~Chen,~\IEEEmembership{Senior~Member,~IEEE,} Zhenxin~Cao,~\IEEEmembership{Member,~IEEE} 
\thanks{This work was supported in part by the Natural Science Foundation for Excellent Young Scholars of Jiangsu Province under Grant BK20220128, the Open Fund of State Key Laboratory of Integrated Chips and Systems under Grant SKLICS-K202305, the Open Fund of National Key Laboratory of Wireless Communications Foundation under Grant IFN20230105, the Open Fund of National Key Laboratory on Electromagnetic Environmental Effects and Electro-optical Engineering under Grant JCKYS2023LD6, the Open Fund of ISN State Key Lab under Grant ISN24-04, and the National Natural Science Foundation of China under Grant 61801112.} 
\thanks{Mengjiang~Sun, Peng~Chen and Zhenxin~Cao are with the State Key Laboratory of Millimeter Waves, Southeast University, Nanjing 210096, China (e-mail: \{mengjiangsun, chenpengseu, caozx\}@seu.edu.cn). Peng~Chen is also with State Key Laboratory of Integrated Chips and Systems, Fudan University, Shanghai 201203, China.}   \thanks{\textit{(Corresponding author: Peng~Chen)}}}

\markboth{IEEE TRANSACTIONS ON AEROSPACE AND ELECTRONIC SYSTEMS}%
{Shell \MakeLowercase{\textit{et al.}}: Bare Demo of IEEEtran.cls for IEEE Journals}
 
\maketitle

\begin{abstract}
Frequency diverse array multiple-input multiple-output (FDA-MIMO) radar differs from the traditional phased array (PA) radar, and can form range-angle-dependent beampattern and differentiate between closely spaced targets sharing the same angle but occupying distinct range cells. In the FDA-MIMO radar, target range estimation is achieved by employing a subtle frequency variation between adjacent array antennas, so the estimation performance is degraded severely in a practical scenario with frequency offset. In this paper, the range-angle estimation problem for FDA-MIMO radar is considered with frequency offsets in both transmitting and receiving arrays. First, we build a system model for the FDA-MIMO radar with transmitting and receiving frequency offsets. Then, the frequency offset is transferred into an equalized additional noise. The noise characteristics are analyzed in detail theoretically, together with the influence on the range-angle estimation. Moreover, since the effect of the transmitting frequency offset is similar to additional colored noise, denoising algorithms are introduced to mitigate the performance deterioration caused by the frequency offset. Finally, Cram\'{e}r-Rao lower bounds (CRLB) for the range-angle estimation are derived in the scenario with the frequency offsets. Simulation results show the analysis of frequency offset and the corresponding estimation performance using different algorithms. 
\end{abstract}

\begin{IEEEkeywords}
Frequency diverse array (FDA), multiple-input multiple-output (MIMO) radar, frequency offset, colored noise, range-angle estimation.
\end{IEEEkeywords}

\section{Introduction}
\IEEEPARstart{i}{n} recent years, frequency diverse array (FDA) radar has been studied in different fields, including target sensing~\cite{sensing}, mainlobe interference suppression~\cite{mainlobe}, integrated sensing and communication (ISAC)~\cite{CAESAR} and etc. FDA radar was introduced in~\cite{FDA} as a technique to achieve beamforming which is dependent on both range and angle, thus effectively mitigating range-ambiguous clutter.~\cite{FDA_clutter}. Unlike traditional phased array (PA) radars, where all array antennas operate at the same frequency, the FDA radar introduces a frequency difference between array antennas to obtain range-angle dependent beampattern. Due to the inherent coupling of range and angle in the FDA beampattern, it is not feasible to directly estimate the targets' range and angle from the peaks in the beamforming output. Therefore, there are extensive literature dealing with the decoupling methods~\cite{log_increment,random_increment,double_pulse,FDA_sbarray,FDA_MIMO,FDA_MIMO2}. For example, random and logarithmic frequency increments were proposed to obtain decoupled beampattern in~\cite{log_increment,random_increment}, but the beampattern has a high sidelobe and poor beamforming performance. In a double-pulse FDA radar system, two distinct pulse types are transmitted to obtain the targets' angles and ranges separately~\cite{double_pulse}, one characterized by a zero frequency increment and the other by a non-zero frequency increment. In ref.~\cite{FDA_MIMO,LAN1,LAN2}, multiple-input multiple-output (MIMO) and FDA are combined as a frequency diverse array multiple-input multiple-output (FDA-MIMO) radar. Ambiguity can be mitigated with increased degrees of freedom from the MIMO technique. Additionally, ref.~\cite{FDA_sbarray} introduced a subarray-based FDA radar system that employs two distinct frequency increments, allowing for the direct estimation of both range and angle from the peaks in the beamforming output.

In the FDA-MIMO radar, the frequency difference between adjacent array antennas ensures signal orthogonality, facilitating their separation through matched filters in the receiver. Hence, the range and angle estimations are decoupled, and a joint range-angle estimation becomes possible. Ref.~\cite{FDA_CRLB} studies the Cram\'{e}r-Rao lower bound (CRLB) of range-angle estimation in the FDA-MIMO radar, together with the estimation resolution. Subspace-based methods such as two-dimensional multiple signal classification (2D-MUSIC) address the range-angle estimation problem~\cite{FDA_CRLB}. In ref.~\cite{FDA_MIMO_MUSIC}, a reduced-dimension MUSIC algorithm is further proposed, using multiple one-dimensional MUSIC (1D-MUSIC) steps to avoid the two-dimensional search, so the computational complexity is greatly reduced. Moreover, compressed sensing (CS)-based methods are also proposed for the angle-range estimation in the FDA-MIMO radar. For example, grid-based CS algorithms like orthogonal matching pursuit (OMP) algorithm~\cite{OMP} can be used in the range-angle estimation in FDA-MIMO radar system through 2D searching, but the estimation performance degradation caused by grid-mismatch and high computational complexity are hard to balance~\cite{FDA_MIMO_CS,CS_mismatch}. Hence, an atomic norm minimization (ANM)-based method that utilizes the signal sparsity in a continuous parameter domain is proposed and can overcome the grid-mismatch problem~\cite{ANM}. 2D-ANM estimation method is also proposed by extending to a two-fold ANM~\cite{ANM_2d}, but a high dimension brings a huge computation burden. To overcome this problem, a decoupled ANM is also proposed for the joint estimation~\cite{DANM}.

In the traditional PA radars, the direction of arrival (DOA) is estimated through a steering vector concerning carrier frequency and DOA, where transmitting frequency offset can be negligible since it is significantly smaller than the carrier frequency. However, in the FDA-MIMO radar, the frequency difference is also substantially smaller than the carrier frequency, so the steering vector deviation induced by the frequency offset is serious, degrading the performance of the DOA estimation. In ref.~\cite{ANM_gp_error}, the DOA estimation problem takes into account gain and phase errors within the array antennas. Similarly, ref.~\cite{PN} shows that the phase noise in the frequency-modulated continuous wave (FMCW) radar also leads to the steering vector deviation. Therefore, with the frequency offset, the range estimation performance in the FDA-MIMO system will be degraded severely. Ref.~\cite{fda_error} investigated the mainlobe offset and signal to interference and noise ratio (SINR) resulting from frequency increment offset in FDA-MIMO radar. However, only the deterioration of performance of MUSIC algorithm is included and the frequency increment offset is assumed to be unchanged in different pulses. A comprehensive theoretical performance analysis of the influence of frequency offsets including transmitting and receiving frequency offsets in the FDA-MIMO radar is necessary. However, relevant studies have not been made before.

Moreover, the existing algorithms for the joint range and angle estimation in the FDA-MIMO radar are based on white noise assumption~\cite{FDA_MIMO_CS,ANM_2d,DANM,DANM2,FDA_MIMO_MUSIC}. However, according to our analysis, the influence of transmitting and receiving frequency offsets cannot be equalized as white noise. Instead, the additional noise caused by frequency offsets can be colored. Large numbers of literature show that colored noise causes severe degradation in the estimation performance, especially for the subspace-based methods~\cite{cn3,cn5,cn4}. Therefore, various algorithms for scenes with colored noise are proposed~\cite{transformation,ML,rotate,four_level}. When perfectly known, the colored noise can be directly whitened through a linear transformation~\cite{whitening_transformation}. Estimation algorithms can be applied without complication when factoring in the transformation step~\cite{transformation,ML}. Additionally, the covariance differencing method can mitigate the influence of colored noise through rotating arrays~\cite{rotate}. High-order statistics can also suppress the Gaussian process component~\cite{c4_1,c4_2,four_level}. Ref.~\cite{four_level} uses the fourth-order cumulant to avoid the influence of colored noise. However, the size of the fourth-order cumulant matrix is large, and the high computational burden prevents it from practical application.

In this paper, the range-angle estimation problem for the FDA-MIMO radar is considered with frequency offsets in both transmitting and receiving arrays. The main contributions of this paper are summarized below:
\begin{itemize}
	\item \textbf{The system model of FDA-MIMO radar with transmitting and receiving frequency offsets}: Considering transmitting and receiving frequency offsets, a novel model of FDA-MIMO radar is formulated. The frequency offsets are transferred to an additional noise so that the influence of the frequency offsets can be shown theoretically.
	\item \textbf{The influence of the frequency offsets on range-angle estimation in the FDA-MIMO radar}: We show the deterioration of the range-angle estimation caused by the frequency offsets by analyzing the characteristics of the equalized additional noise theoretically.
    \item \textbf{The influence of the frequency offsets on the performance of the joint range-angle estimation and the corresponding CRLB}: The performance of joint range-angle estimation algorithms in the FDA-MIMO radar with the frequency offsets are shown. Specifically, the analysis shows that the equivalent noises are colored, and thus algorithms for colored noise mitigation are given. CRLB for the joint estimation is derived with the frequency offsets.
\end{itemize}

The rest of this paper is organized as follows. In Section~\ref{sec2}, we build the system model for the FDA-MIMO radar with transmitting and receiving frequency offsets, and the characteristics of the offsets are also analyzed. In Section~\ref{sec4}, denoising algorithms for the frequency offset are presented. The CRLB for the range-angle estimation is derived in Section~\ref{sec3}. Simulation results and analysis are shown in Section~\ref{sec5}. Finally, Section~\ref{sec6} concludes this paper.

\emph{Notations:} $\boldsymbol{x}^\text{T}$, $\boldsymbol{x}^\text{H}$ and $\boldsymbol{x}^\text{*}$ denote the transpose of $\boldsymbol{x}$, the Hermitian transpose of $\boldsymbol{x}$ and the conjugation of $\boldsymbol{x}$, respectively. $\| \boldsymbol{x} \|_1$ and $\| \boldsymbol{x} \|_2$ denote the $\ell_1$ norm of $\boldsymbol{x}$ and the $\ell_2$ norm of $\boldsymbol{x}$, respectively. $\mathrm{Tr} \{ \boldsymbol{X} \}$ denotes the trace of $\boldsymbol{x}$. $\mathbb{E} \{ \boldsymbol{x} \}$ denotes the expectation of $\boldsymbol{x}$. $\mathcal{R} \{a\}$ denotes the real part of complex value $a$. $\otimes$ denotes the Kronecker product. $\odot$ denotes the Khatri-Rao product. The boldface capital letters represent matrices, as indicated by $\boldsymbol{X}$, while lowercase letters represent vectors, denoted by $\boldsymbol{x}$.

\section{The System Model of FDA-MIMO Radar}\label{sec2}
We consider a co-located FDA-MIMO radar comprising a uniform linear array (ULA) equipped with $N$ transmitting antennas and a ULA with $M$ receiving antennas as shown in Fig.~\ref{fig1}.

\subsection{Transmitting Signal Model}
{Use the first array antenna as a reference, and the signal emitted from the $n$-th transmitting antenna is} \cite{sensing,random_increment,E_N}
\begin{equation}
s_n(t)=\sqrt{\frac{E}{N}}  \mathrm{\Pi}  \left( \frac{t}{T_p} \right) e^{-j2 \pi f_{\mathrm{t},n} t}, n=1,2, \cdots, N,
\end{equation}
where $E$ is the total transmitted energy. $\mathrm{\Pi}(t)$ is the rectangular pulse with value $1$ for $t \in [0,1) $ and zero otherwise. $T_p$ is the pulse duration.  $f_{\mathrm{t},n}$ is the transmitting carrier frequency at the $n$-th antenna, and can be expressed as
\begin{equation}
f_{\mathrm{t},n} =  f_0 + \left( n-1 \right) \Delta f +f_{e,\mathrm{t},n},
\end{equation}
 where $f_0$ represents the carrier frequency, and $ \Delta f$ is the frequency difference between adjacent antennas. Note that $ \Delta f$ is significantly smaller than the carrier frequency $f_0$, i.e., $ f_0 \gg \Delta f$. $f_{e,\mathrm{t},n}$ denotes the transmitting frequency offset at the $n$-th antenna. According to~\cite{CFO_tr,CFO_0,fet10}, the carrier frequency offset is assumed to obey Gaussian distribution. Since the pulse duration is much less than the pulse repetition time, we assume that the frequency offsets remain constant for the duration of one pulse but vary and are independent and identically distributed (i.i.d) among different pulses.
 \begin{figure}[H]
\centering
\fontsize{10pt}{12pt}\selectfont
\subfloat[Transmitter]{
\label{$ f_s = 1e^7$}
\includegraphics[width=3.1in]{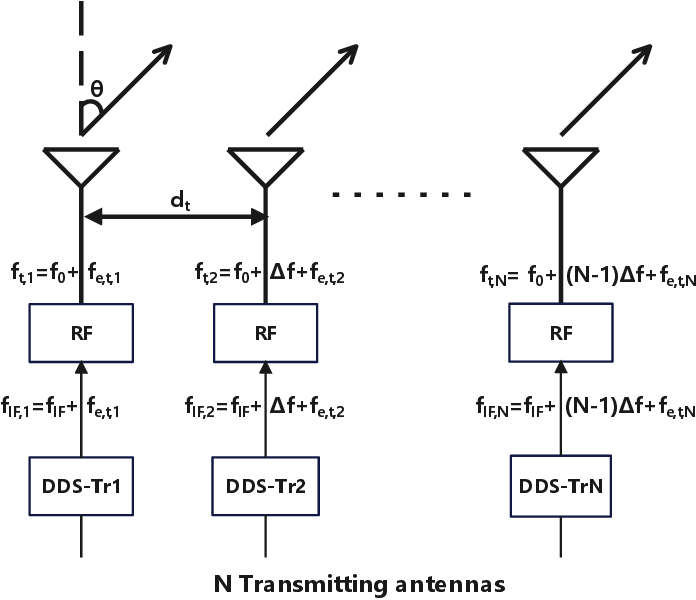}}
\quad
\subfloat[Receiver]{
\label{fig:subfig:c}  
\includegraphics[width=3in]{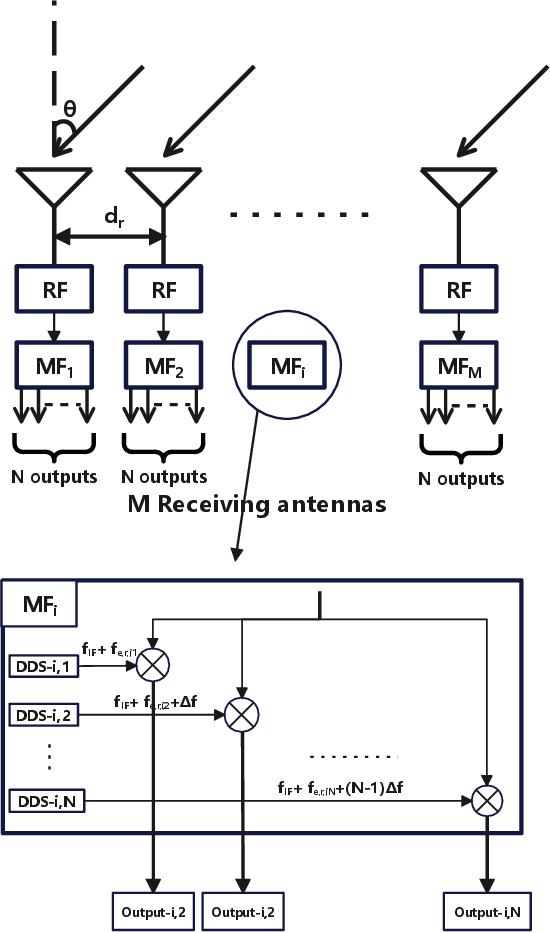}}
\caption{Diagram of an FDA-MIMO radar.}
\label{fig1}
\end{figure}

{The $N$ transmitted waveforms $\{s_n(t)\}_{n=1,2, \cdots, N}$ are supposed to be orthogonal to each other} \cite{FDA_CRLB,sensing,FDA_orth}{, namely,}
\begin{equation} 
\int_0^{T_P} s_{n_1}(t) s_{n_2}^\text{*}(t) dt = 0, n_1 \neq n_2, 
\end{equation}
where $n_1,n_2 \in [ 1,2, \cdots, N ].$ After simplification, the orthogonality equals
\begin{equation} \label{orth0}
\int_0^{T_P} e^{j 2 \pi \Delta f (n_1-n_2) t} dt= 0, n_1 \neq n_2,
\end{equation}
where $T_p = 1/ \Delta f$ satisfies the orthogonality.

\subsection{Receiving Signal Model}
Suppose a  far-field point target reflects the transmitting signals with direction $\theta$ and range $r$, and the round-trip propagation delay from the $n$-th transmitting antenna to the $m$-th receiving antenna is
\begin{equation}
\tau_{m,n} = \frac{1}{c}\left[2 r-(n-1)d_t \sin(\theta) - (m-1)d_r \sin(\theta)\right],
\end{equation}
where $d_t$  and $d_r$ are the element spacing of the transmitting array and receiving array, respectively. $d_t$ and $d_r$ are assumed to equal half of the wavelength, i.e., $d_t = d_r = d =c/(2f_0)$. The received signal at the $m$-th antenna can be represented as
\begin{equation}
\begin{split}
{Y}_{m}(t) &= \sum_{i=1}^N \alpha s_i(t-\tau_{m,i}) \\ &=\sum_{i=1}^{N} \alpha \sqrt{\frac{E}{N}} \mathrm{\Pi} (\frac{t-\tau_{m,i}}{T_p}) e^{-j2 \pi f_{t,i} (t-\tau_{m,i})},
\end{split}
\end{equation}
where $\alpha$ is the coefficient of the received signal  in the $l$-th pulse, accounting for the transmit amplitude, target reflecting coefficient, propogation decay, etc. The received signal is then matched filtered by $e^{j 2 \pi f_{r,m,n} t}$ to separate signals from different transmitting antennas, where $f_{r,m,n}$ denotes the $n$-th down-conversion frequency for the $m$-th receiving antenna. $f_{r,m,n}$ satisfies
\begin{equation}
f_{r,m,n} = f_0 + (n-1) \Delta f +f_{e,r,m,n}, m=1,2, \cdots, M,
\end{equation}
where $f_{e,r,m,n}$ denotes the receiving frequency offset of the $n$-th down-conversion frequency for the $m$-th receiving antenna, which is assumed to be mainly induced by oscillator frequency errors. Ref.~\cite{CFO_r} builds the model for oscillator frequency errors, which includes white frequency noise, Flicker frequency noise and random walking frequency noise. According to the frequency noise spectrum in ref.~\cite{CFO_r}, white frequency noise dominates the frequency noise at high working frequency. Therefore, we build the receiving frequency offset as white noise, which obeys i.i.d Gaussian distribution for each receiving antenna in each pulse. The receiving frequency offsets at each antenna are assumed to remain constant for the duration of one pulse but vary and are i.i.d among different pulses. {Moreover, the transmitting frequency offsets are assumed to be independent of the receiving frequency offsets in the same pulse since distinct antenna elements and different Direct Digital Synthesis (DDS) are used for the transmitting and receiving arrays, respectively.} The received signal at the $m$-th antenna is then fed into $N$ matched filters, and the $n$-th output of the $m$-th received antenna is
\begin{equation}
\begin{split}
{Y}_{m,n} &= \int_{\tau_{m,i}}^{\tau_{m,i}+T_p} Y_{m}(t) e^{j 2 \pi f_{r,m,n} t}dt \\ &= \sum_{i=1}^{N} \int_{\tau_{m,i}}^{\tau_{m,i}+T_p} \alpha \sqrt{\frac{E}{N}}  e^{-j 2 \pi f_{t,i} (t-\tau_{m,i})} e^{j 2 \pi f_{r,m,n} t} dt \\ &=  \sum_{i=1}^{N}  \alpha \sqrt{\frac{E}{N}} e^{j 2 \pi f_{r,m,n} \tau_{m,i}} \int_{0}^{T_p} e^{-j 2 \pi (f_{t,i} -f_{r,m,n})t} dt.
\end{split}\label{eq0}
\end{equation}
{Since the frequency difference $\Delta f$ and the receiving frequency offset $f_{e,r,m,n}$ are both far less than the carrier frequency $f_c$, the differences caused by $\Delta f$ and $f_{e,r,m,n}$ are ignored within the array aperture} \cite{sensing,FDA_orth,FDA_MIMO}{, i.e.}
\begin{equation}\label{asu1}
			\begin{aligned}
				\frac{\Delta f d \sin \theta}{c} & = \frac{\Delta f\sin \theta}{2 f_c} \ll 1
				\\ \frac{f_{e,r,m,n} d \sin \theta}{c} & = \frac{f_{e,r,m,n} \sin \theta}{2 f_c} \ll 1
			\end{aligned}
		\end{equation}
$e^{j 2 \pi f_{r,m,n} \tau_{m,i}}$ is then simplified as 
\begin{equation}
\begin{split}
& \qquad e^{j 2 \pi f_{r,m,n} \tau_{m,i} }
\\& = e^{j 2 \pi ( f_0 + (n-1) \Delta f +f_{e,r,m,n})(\frac{2 r-(m+i-2)d \sin(\theta)}{c})}
\\& = e^{j 2 \pi [ (f_0 + (n-1) \Delta f +f_{e,r,m,n})\frac{2r}{c} - f_0  \frac{(m+i-2)d \sin(\theta)}{c}]}.
\end{split}
\end{equation}
Both the transmitting and receiving frequency offsets are assumed to be far less than the frequency resolution, i.e., $\{\left |f_{e,t,i} \right | ,\left |f_{e,r,m,n}\right | \} \ll 1/T_p  = \Delta f$. {The target is assumed to be located in the unambiguous range, i.e., $r < \frac{c}{2\Delta f}$.} We can then simplify $e^{j 2 \pi f_{e,r,m,n}\frac{2r}{c}}$ and $e^{-j 2 \pi (f_{e,t,i}-f_{e,r,m,n})t}$ through the first-order Taylor expansion. {The filter output $Y_{m,n}$ is firstly shown as}
			\begin{equation} \label{Ymn0}
				\begin{aligned}
					& {Y}_{m,n}  
     \\ = &  \sum_{i=1}^{N}  \alpha \sqrt{\frac{E}{N}} e^{j 2 \pi f_{r,m,n} \tau_{m,i}} \int_{0}^{T_p} e^{-j 2 \pi (f_{tr,i} -f_{r,m,n})t} dt
     \\ &+ N_0
					\\ = & \sum_{i=1}^{N}  \alpha \sqrt{\frac{E}{N}}  e^{j 2 \pi [ (f_0 + (n-1) \Delta f +f_{e,r,m,n})\frac{2r}{c} - f_0  \frac{(m+i-2)d \sin(\theta)}{c}]}  
     \\& \int_{0}^{T_p} e^{-j 2 \pi [(i-n) \Delta f + f_{e,t,i}-f_{e,r,m,n} ] t} dt  + N_0
				\end{aligned}
			\end{equation}
			{$N_0$ denotes the additive Gaussian white noise. The carrier frequency offset $f_{e,t,i}$ and the receiving frequency offset $f_{e,r,m,n}$ are both assumed to be smaller than the frequency increment $\Delta f$, and the pulse duration $T_p$ is set as $\frac{1}{\Delta f}$, we have following approximations using the first-order Taylor expansion}
			
			\begin{equation}\label{taylor}
				\begin{aligned}
					e^{j 2 \pi f_{e,r,m,n}\frac{2r}{c}} & \approx 1 + j 2 \pi f_{e,r,m,n}\frac{2r}{c},
					\\ e^{-j 2 \pi (f_{e,t,i}-f_{e,r,m,n} ) t} & \approx  1- j 2 \pi (f_{e,t,i}-f_{e,r,m,n} ) t.
				\end{aligned}
			\end{equation}
			{The filter output $Y_{m,n}$ is then simplified as}
			\begin{equation} 
			\begin{split}
				{Y}_{m,n}  = & \; \sum_{i=1}^{N}  \alpha \sqrt{\frac{E}{N}} e^{j 2 \pi [ (f_0 + (n-1) \Delta f )\frac{2r}{c}- f_0  \frac{(m+i-2)d \sin(\theta)}{c}]}
				\\ & ( 1+j 2 \pi f_{e,r,m,n} \frac{2r}{c}) 
				\int_{0}^{T_p} \bigg [ e^{-j 2 \pi (i-n) \Delta f t} 
                \\ & (1- j 2 \pi 
				(f_{e,t,i}-f_{e,r,m,n})t) \bigg ]dt + N_0.
				\\ = & \; \alpha \sqrt{\frac{E}{N}} e^{j 4 \pi f_0 \frac{r}{c}} e^{j 2 \pi (n-1) \Delta f \frac{2r}{c}} ( 1+j 2 \pi f_{e,r,m,n} \frac{2r}{c}) 
                \\ & \; \sum_{i=1}^{N} e^{-j 2 \pi f_0  \frac{(m+i-2)d \sin(\theta)}{c}}
				\quad \bigg[ \int_{0}^{T_p} e^{-j 2 \pi (i-n) \Delta f t} dt - 
                \\ & j 2 \pi (f_{e,t,i}-f_{e,r,m,n}) \int_{0}^{T_p} e^{-j 2 \pi (i-n) \Delta f t} t dt \bigg] + N_0. 
			\end{split}
		      \end{equation}
        {According to the orthogonality as shown in Eq.(}\ref{orth0}{)  in the manuscript, we have}
        \begin{equation}
			\int_{0}^{T_p} e^{-j 2 \pi (i-n) \Delta f t} dt = \left\{
			\begin{aligned}
				T_p, & \; i = n \\
				0, & \; i \neq n 
			\end{aligned}
			\right.
		\end{equation}
        {Therefore, We have}
        \begin{equation}
        \begin{aligned}
			& \; \sum_{i=1}^{N} e^{-j 2 \pi f_0  \frac{(m+i-2)d \sin(\theta)}{c}} \int_{0}^{T_p} e^{-j 2 \pi (i-n) \Delta f t} dt = 
   \\ & \; T_p e^{-j 2 \pi f_0  \frac{(m+n-2)d \sin(\theta)}{c}}
        \end{aligned}
		\end{equation}
        {Denote $\beta$ and $I_{1,i,n}$ as}
        \begin{equation}
			\begin{aligned}
				\beta & = \alpha T_p \sqrt{\frac{E}{N}} e^{j 4 \pi f_0 \frac{r}{c}},
				\\ I_{1,i,n} & = \int_{0}^{T_p} e^{-j 2 \pi (i-n) \Delta f t} t dt.
			\end{aligned}
		\end{equation}
        {The filter output $Y_{m,n}$ is then simplified as}
        \begin{equation}
			\begin{aligned}
				Y_{m,n} = & \beta e^{j 2 \pi (n-1) \Delta f \frac{2r}{c}} ( 1+j 2 \pi f_{e,r,m,n} \frac{2r}{c})
                \\ & \bigg[ e^{-j 2 \pi f_0  \frac{(m+n-2)d \sin(\theta)}{c}} -  \frac{j 2 \pi}{T_p} \sum_{i=1}^{N}(f_{e,t,i}-f_{e,r,m,n})
				\\ &  e^{-j 2 \pi f_0  \frac{(m+i-2)d \sin(\theta)}{c}} I_{1,i,n} \bigg] + N_0.
			\end{aligned}
		\end{equation}
        {Since the maximum unambiguous range of FDA-MIMO radar is $\frac{c}{2\Delta f}$, we have}
        \begin{equation}
			f_{e,r,m,n}\frac{2r}{c} \leq f_{e,r,m,n}\frac{2c}{2c\Delta f} = \frac{f_{e,r,m,n}}{\Delta f}.
		\end{equation}
        {According to the assumption, the frequency offset $f_{e,r,m,n}$ is assumed to be far less than the frequency difference $\Delta f$, and thus $2 \pi f_{e,r,m,n} \frac{2r}{c}$ is a small quantity compared with $1$. Moreover, $I_{1,i,n}$ satisfies }
        \begin{equation}
			\left| I_{1,i,n} \right| \leq \int_{0}^{T_p} \left|e^{-j 2 \pi (i-n) \Delta f t} t\right| dt = \frac{T_p^2}{2}.
		\end{equation}
        {With the orthogonality requirement $T_p = \frac{1}{\Delta f}$, we then have}
        \begin{equation}
        \begin{aligned}
		  & \frac{\left|(f_{e,t,i}-f_{e,r,m,n}) e^{-j 2 \pi f_0  \frac{(m+i-2)d \sin(\theta)}{c}} I_{1,i,n} \right|}{\left|T_p e^{-j 2 \pi f_0  \frac{(m+n-2)d \sin(\theta)}{c}}\right|} \leq 
            \\ & \frac{\left|(f_{e,t,i}-f_{e,r,m,n})\right| T_p^2}{2 T_p} = \frac{\left|f_{e,t,i}-f_{e,r,m,n}\right|}{2\Delta f}
        \end{aligned}
		\end{equation}
        {Since the frequency offset $f_{e,r,m,n}$ and $f_{e,t,i}$ are assumed to be far less than the frequency increment $\Delta f$, $\left|(f_{e,t,i}-f_{e,r,m,n}) e^{-j 2 \pi f_0  \frac{(m+i-2)d \sin(\theta)}{c}} I_{1,i,n} \right|$ is a small quantity compared with $\left|T_p e^{-j 2 \pi f_0  \frac{(m+n-2)d \sin(\theta)}{c}}\right|$. Therefore, the product of the two first-order quantities, i.e., $2 \pi f_{e,r,m,n} \frac{2r}{c} \frac{2 \pi}{T_p} \sum_{i=1}^{N}(f_{e,t,i}-f_{e,r,m,n}) $ $ e^{-j 2 \pi f_0  \frac{(m+i-2)d \sin(\theta)}{c}} I_{1,i,n}$, is a second-order small quantity compared with $e^{-j 2 \pi f_0  \frac{(m+n-2)d \sin(\theta)}{c}}$. We retain the first-order small quantities and omit the second-order small quantity, and $Y_{m,n}$ is further written as }
        \begin{equation}
			\begin{aligned}
				Y_{m,n} = & \beta e^{j 2 \pi (n-1) \Delta f \frac{2r}{c}} \bigg[ e^{-j 2 \pi f_0  \frac{(m+n-2)d \sin(\theta)}{c}} + 
                \\ & j 2 \pi f_{e,r,m,n} \frac{2r}{c} e^{-j 2 \pi f_0  \frac{(m+n-2)d \sin(\theta)}{c}} - \frac{j 2 \pi}{T_p}
				\\ &  \sum_{i=1}^{N}(f_{e,t,i}-f_{e,r,m,n})e^{-j 2 \pi f_0  \frac{(m+i-2)d \sin(\theta)}{c}} I_{1,i,n}\bigg] + N_0
				.
			\end{aligned}
		\end{equation}
        {We denote $N_{t,m,n}$ and $N_{r,m,n}$ as} 
			\begin{equation}
				\begin{split}
					& N_{t,m,n} = 
					j 2 \pi \frac{\beta}{T_p} \sum_{i=1}^{N} f_{e,t,i}  e^{j 2 \pi [ (n-1) \Delta f \frac{2r}{c}- (m+i-2)f_\theta]} I_{1,i,n},
				\end{split}\label{eq4}
			\end{equation}
			\begin{equation}
				\begin{split}
					 N_{r,m,n} =& 
					j 2 \pi f_{e,r,m,n}\frac{2 r}{c} \beta  e^{j 2 \pi [(n-1) \Delta f \frac{2r}{c}- (m+n-2) f_\theta}] +
					\\ &  j 2 \pi f_{e,r,m,n} \frac{\beta}{T_p} \sum_{i=1}^{N}  e^{j 2 \pi [ (n-1) \Delta f \frac{2r}{c}- (m+i-2)f_\theta]} I_{1,i,n},
				\end{split}\label{eq5}
			\end{equation}
	{where $f_\theta \triangleq f_0 d \sin (\theta)/c$ is the normalized spatial frequency. $Y_{m,n}$ is finally written as} 
		\begin{equation} \label{Ymn}
        \begin{aligned}
			Y_{m,n} = & \beta e^{j 2 \pi (n-1) \Delta f \frac{2r}{c}} e^{-j 2 \pi f_0  \frac{(m+n-2)d \sin(\theta)}{c}}
            \\ & + N_{t,m,n} + N_{r,m,n} + N_0.
        \end{aligned}
		\end{equation}
$\boldsymbol{Y}$ can be then represented as 
\begin{equation} \label{eq16}
\boldsymbol{Y} = \beta \boldsymbol{a}_r (\theta) \boldsymbol{a}_t (\theta,r)^\text{T} +\boldsymbol{N}_t + \boldsymbol{N}_r + \boldsymbol{N}_0,
\end{equation}
The receiving steering vector is
\begin{equation}
\boldsymbol{a}_r (\theta) = [1,e^{-j 2 \pi f_\theta},\cdots,e^{-j 2 \pi(M-1)f_\theta}]^\text{T}.
\end{equation}
The transmitting steering vector is
\begin{equation}
\boldsymbol{a}_t (\theta,r) = [1,e^{j 2 \pi \phi},\cdots,e^{j 2 \pi ((N-1)\phi)}]^\text{T},
\end{equation}
with $ \phi = 2 r\Delta f/c - f_\theta$.

{To determine the valid range for the frequency offsets that ensures an effective approximation with the first-order Taylor expansion Eq.(}\ref{taylor}{), we conduct numerical simulations under two scenarios: both carrier and receiving frequency offsets exist and only one frequency offset exists. The simulation results are shown in Fig.}\ref{Approximation_error0}{.				
			When only transmitting or frequency offset exists, the relative error of the approximation in Eq.(}\ref{taylor}{) is about $1 \%$ with $\sigma = 0.04 \Delta f$. the relative error increases to about $8 \%$ with $\sigma = 0.1 \Delta f$. When both transmitting and receiving frequency offsets exist, the relative error of the approximation in Eq.(}\ref{taylor}{) is about $1 \%$ with the standard deviation $\sigma = 0.02 \Delta f$ and increases to below  $10 \%$ for the standard deviation $\sigma$ up to $0.05 \Delta f$. When $\sigma = 0.1 \Delta f$, the relative error of the approximation exceeds $30 \%$, indicating a less effective approximation.}
			\begin{figure}[h] 
				\centering
				\includegraphics[width=0.45\textwidth]{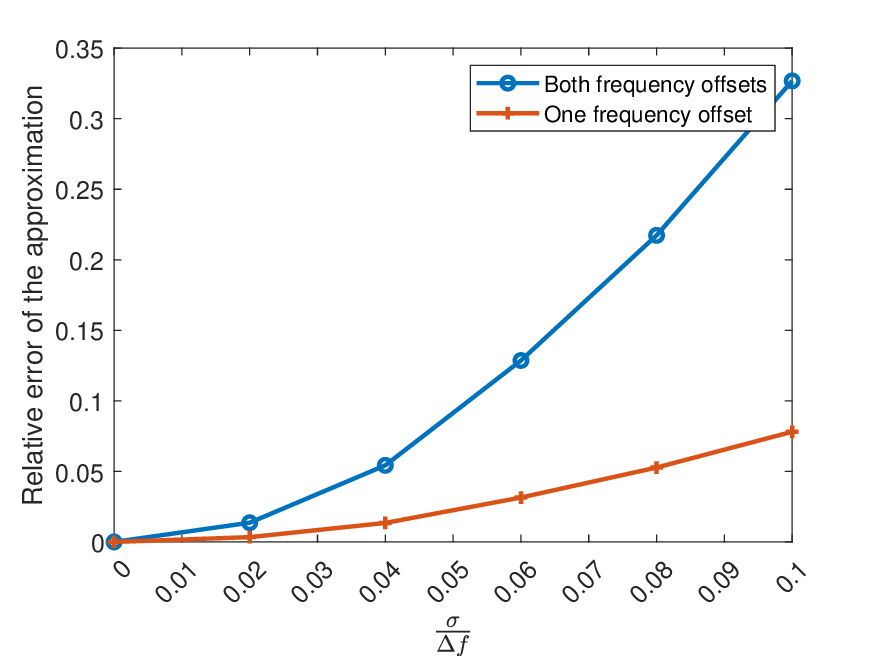}
				\caption{The relative error between the approximated signal matrix and the reference signal matrix.}
				\label{Approximation_error0}
			\end{figure}
			
			{In general, we consider the approximation effective and valid for $\sigma < 0.05 \Delta f$, and deem it ineffective when $\sigma > 0.1 \Delta f$. This boundary condition for the choice of Taylor expansion order, as supported by our numerical simulations, ensures the accuracy of our approximations under the specified conditions.}
   
The problem is to estimate the target range $r$ and the target angle $\theta$ in the model. In this paper, we analyze the noise characteristics caused by frequency offset, i.e., $\boldsymbol{N}_t$ and $\boldsymbol{N}_r$, together with their
influence on estimating range and DOA. According to the analysis, dedicated algorithms are applied estimate the target parameters.

\subsection{The Noise Characteristics}
As shown in Eq.(\ref{Ymn}), we have
\begin{equation}
\begin{split}
{Y}_{m,n} = & \beta e^{j 2 \pi [  (n-1) \Delta f \frac{2r}{c}- (m+n-2) f_\theta]}
\\ & + N_{t,m,n} + N_{r,m,n} + N_0
\end{split}
\end{equation}
{Two types of noises are considered in this paper: the Gaussian white noise and the equalized noise resulting from carrier and receiving frequency offsets. The Gaussian white noise represents the random, uncorrelated noise commonly encountered in signal processing. The noise resulting from carrier and receiving frequency offsets represent the interference on the signal matrix caused by the the carrier and receiving frequency offsets.}
If the transmitting and receiving frequency offsets follow i.i.d zero-mean Gaussian distribution among different pulses, the noise characteristics caused by the transmitting and receiving frequency offsets have the following properties.


\begin{theorem}\label{prop1}
{The noise caused by the receiving frequency offset will disturb the phase difference among both rows and columns in $\boldsymbol{Y}$. The noise caused by the transmitting frequency offsets will disturb the phase difference between different columns in $\boldsymbol{Y}$, but will not interfere with phase difference between different rows in $\boldsymbol{Y}$}. 
\end{theorem}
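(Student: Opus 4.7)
The plan is to compare the $m$-dependence and $n$-dependence of the two noise terms $N_{t,m,n}$ and $N_{r,m,n}$ against the $m$- and $n$-dependence of the noise-free signal $S_{m,n} = \beta e^{j 2\pi(n-1)\Delta f \frac{2r}{c}} e^{-j 2\pi(m-1)f_\theta} e^{-j 2\pi(n-1)f_\theta}$. The guiding principle is that the ``phase difference between rows'' (resp.\ columns) of $\boldsymbol{Y}$ is undisturbed along a given axis precisely when the noise term factors as a deterministic function of the index on that axis times a quantity that does not depend on that index randomly, matching the deterministic progression already present in $S_{m,n}$.

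First I would factor $N_{t,m,n}$ from Eq.(\ref{eq4}) by pulling the exponential in $m$ out of the sum over $i$:
\begin{equation}
N_{t,m,n} = e^{-j 2\pi(m-1)f_\theta} \cdot \underbrace{j 2\pi \tfrac{\beta}{T_p} e^{j 2\pi(n-1)\Delta f \frac{2r}{c}} \textstyle\sum_{i=1}^{N} f_{e,t,i} e^{-j 2\pi(i-1)f_\theta} I_{1,i,n}}_{\text{independent of } m}.
\end{equation}
Thus the only $m$-dependent factor of $N_{t,m,n}$ is exactly $e^{-j 2\pi(m-1)f_\theta}$, which is also the only $m$-dependent factor of $S_{m,n}$. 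Consequently, for any fixed $n$, the ratio $Y_{m_1,n}/Y_{m_2,n}$ contains no randomness coming from $\{f_{e,t,i}\}$, so the row-wise phase difference is preserved under the transmitting frequency offset alone. In contrast, the $n$-dependence of $N_{t,m,n}$ enters through the random combination $\sum_i f_{e,t,i}\, e^{-j2\pi(i-1)f_\theta} I_{1,i,n}$, whose dependence on $n$ (through $I_{1,i,n}$) does not match the pure $e^{j 2\pi(n-1)(\Delta f\frac{2r}{c}-f_\theta)}$ progression of $S_{m,n}$. Therefore $N_{t,m,n}$ perturbs the column-wise phase differences.

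For $N_{r,m,n}$ in Eq.(\ref{eq5}), the decisive observation is that $f_{e,r,m,n}$ is i.i.d.\ across both $m$ and $n$, so it cannot be factored out of the $m$-axis: the random variable itself indexes the row. Concretely, I would show that even after pulling the deterministic signal-like exponentials to the front, the remaining coefficient in front of each term carries the factor $f_{e,r,m,n}$, which differs independently for distinct $m$ (with the same $n$) and for distinct $n$ (with the same $m$). Hence row-wise and column-wise ratios both acquire random multiplicative perturbations, i.e., $N_{r,m,n}$ disturbs phase differences along \emph{both} axes.

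The one subtlety I expect to navigate carefully is making the statement ``does/does not disturb the phase difference'' precise: the natural formalization is that for transmitting offsets the noise is of the separable form $e^{-j2\pi(m-1)f_\theta}\, g_t(n; r,\theta,\{f_{e,t,i}\})$, so the random content lives entirely on the column index and the row-progression inherited from the signal is untouched, whereas for receiving offsets no such separation exists because $f_{e,r,m,n}$ is doubly indexed and independent across $(m,n)$. Once this separable/non-separable distinction is established, the conclusion follows directly from the explicit formulas in Eqs.(\ref{eq4}) and (\ref{eq5}), with no further computation required.
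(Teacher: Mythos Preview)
Your proposal is correct and follows essentially the same approach as the paper: you factor $N_{t,m,n}$ to exhibit its $m$-dependence as the pure deterministic $e^{-j2\pi(m-1)f_\theta}$ (the paper states this equivalently via the ratio $N_{t,m_1,n}/N_{t,m_2,n}=e^{-j2\pi(m_1-m_2)f_\theta}$), and you argue the non-separability of $N_{r,m,n}$ from the double indexing of $f_{e,r,m,n}$, which mirrors the paper's ratio computations for $N_r$. The separable/non-separable framing is just a repackaging of the paper's explicit ratio argument, with no substantive difference.
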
 

\begin{proof}
According to the representation of $N_{t,m,n}$, i.e.,
\begin{equation}
\begin{split}
& N_{t,m,n} = 
 j 2 \pi \frac{\beta}{T_p} \sum_{i=1}^{N} f_{e,t,i}  e^{j 2 \pi [ (n-1) \Delta f \frac{2r}{c}- (m+i-2)f_\theta]} I_{1,i,n},
\end{split}
\end{equation}
The ratio between $N_{t,m,n_1}$ and $N_{t,m,n_2}$ $(n_1,n_2 \in [1,N]$, $n_1 \neq n_2)$ is
\begin{equation}
    \phi(n_1,n_2) = e^{j2\pi (n_1-n_2) \Delta f \frac{2 r}{c}} \frac{\sum_{i=1}^{N} f_{e,t,i} e^{-j 2 \pi \frac{i d \sin(\theta)}{c}} I_{1,i,n_1}}{\sum_{i=1}^{N} f_{e,t,i} e^{-j 2 \pi \frac{i d \sin(\theta)}{c}} I_{1,i,n_2}}.
\end{equation}

Since the transmitting frequency offsets of different antennas are independent among different pulses, phase differences between different columns in $\boldsymbol{N}_t$ are disturbed. Therefore, when using the phase differences between columns to estimate the target range, $\boldsymbol{N}_t$ will disturb the estimation, and $\boldsymbol{N}_t$ can only be treated as an additional noise when estimating the range. The ratio between $N_{t,m_1,n}$ and $N_{t,m_2,n}$ $(m_1,m_2 \in [1,M], m_1 \neq m_2)$ is
\begin{equation}
    \phi(m_1,m_2) = e^{-j 2 \pi (m_1 - m_2) f_\theta}.
\end{equation}
The phase differences between different rows in $\boldsymbol{N}_t$ remain the same as in steering vector $\boldsymbol{a}_r (\theta)$. {Therefore, the transmitting frequency offset does not influence the phase difference between different rows in $\boldsymbol{Y}$, and thus the estimation is not interrupted by transmitting frequency offsets when using the phase difference among the rows $\boldsymbol{Y}$ to estimate $\theta$.}

$N_{r,m,n}$ is represented as
\begin{equation}
\begin{split}
& N_{r,m,n} = j 2 \pi f_{e,r,m,n}\frac{2 r}{c} \beta  e^{j 2 \pi [ (n-1) \Delta f \frac{2r}{c}- (m+n-2) f_\theta]}
\\ & + j 2 \pi f_{e,r,m,n} \frac{\beta}{T_p} \sum_{i=1}^{N}  e^{j 2 \pi [ (n-1) \Delta f \frac{2r}{c}- (m+i-2) f_\theta]} I_{1,i,n}.
\end{split}
\end{equation}

The ratio between $N_{r,m,n_1}$ and $N_{r,m,n_2}$ $(n_1,n_2 \in [1,N], n_1 \neq n_2)$ is
\begin{equation}
\begin{split}
    & \phi(n_1,n_2) = 
\\ & \frac{f_{e,r,m,n_1} ( \frac{2 r}{c} T_p e^{-j 2 \pi n_1 f_\theta} + \sum_{i=1}^{N} e^{-j 2 \pi (m+i) f_\theta } I_{1,i,n_1})} 
 {f_{e,r,m,n_2} ( \frac{2 r}{c} T_p e^{-j 2 \pi n_2 f_\theta} + \sum_{i=1}^{N} e^{-j 2 \pi (m+i) f_\theta } I_{1,i,n_2})}.
\end{split}
\end{equation}
The receiving frequency offsets of different filter outputs at different antennas are also assumed to be independent among different pulses. Thus phase differences between different columns in $\boldsymbol{N}_r$ are disturbed. With a similar operation, it is easy to obtain that the phase differences between different rows in $\boldsymbol{N}_r$ are also disturbed. {Therefore, the receiving frequency offset will disturb the phase difference between different rows and different columns in $\boldsymbol{Y}$. Consequently, $\boldsymbol{N}_r$ is seen as an additional noise for both range and angle estimation.}
\end{proof}

{Different from the neglect of $\Delta f$ and $f_{e,r}$ within the array aperture in Eq.(}\ref{asu1}{), the impact of the frequency offsets on the estimation cannot be ignored. Eq.(}\ref{asu1}{) actually comes from the assumption $\Delta f \ll f_c$ and $f_{e,r} \ll f_c$, while the phase perturbation caused by the frequency offsets can be large enough to deteriorate the estimation accuracy. For example, the phase error caused by the receiving frequency offsets accumulates during the signal transmission, and the accumulated phase error concerns with the target range, which far exceeds the size of the the array aperture. Therefore, although the difference cause by the receiving frequency offset $f_{e,r}$ can be negligible within the array aperture in Eq.(}\ref{asu1}{), its effect on the estimation cannot be omitted.}

{In general, the transmitting frequency offsets will disturb the phase difference among vectors in $\boldsymbol{Y}$ but not the phase difference among rows because the transmitting frequency offset is a second-order small quantity compared with carrier frequency, but only a first-order small quantity compared with $\Delta f$. The receiving frequency offsets will disturb the phase difference among both rows and vectors in $\boldsymbol{Y}$ because the phase change caused by receiving frequency offsets is accumulated during the transmission process. The accumulated phase change cannot be neglected especially when the target range is large.}

\begin{theorem}\label{prop2}
The noise caused by the transmitting frequency offset added on the matched filter outputs is colored. 
\end{theorem}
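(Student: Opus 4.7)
The plan is to show that the entries of $\boldsymbol{N}_t$, when viewed as a random vector indexed by $(m,n)$, have a covariance matrix that is \emph{not} a scalar multiple of the identity, which is precisely what it means for the additive noise to be colored. Since only the random variables $\{f_{e,t,i}\}_{i=1}^{N}$ enter the expression for $N_{t,m,n}$ and they are i.i.d.\ zero-mean Gaussian with some variance $\sigma_t^2$, the computation reduces to a second-moment calculation that I can carry out in closed form.

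First I would substitute the explicit formula for $N_{t,m,n}$ from Eq.~(\ref{eq4}) into the cross-moment
\begin{equation}
\mathbb{E}\{N_{t,m_1,n_1} N_{t,m_2,n_2}^{*}\}
\end{equation}
and exchange expectation with the double sum over $i_1,i_2$. Because $\mathbb{E}\{f_{e,t,i_1} f_{e,t,i_2}^{*}\} = \sigma_t^2\,\delta_{i_1,i_2}$, only the diagonal terms $i_1=i_2=i$ survive. The factor $e^{-j2\pi(m_1+i-2)f_\theta}\cdot e^{+j2\pi(m_2+i-2)f_\theta}$ has its $i$-dependence cancel, so the expression collapses to
\begin{equation}
\mathbb{E}\{N_{t,m_1,n_1} N_{t,m_2,n_2}^{*}\}
= \frac{4\pi^2 |\beta|^2 \sigma_t^2}{T_p^2}\,
e^{j2\pi(n_1-n_2)\Delta f \frac{2r}{c}}
e^{-j2\pi(m_1-m_2)f_\theta}\,
C(n_1,n_2),
\end{equation}
where $C(n_1,n_2) \triangleq \sum_{i=1}^{N} I_{1,i,n_1} I_{1,i,n_2}^{*}$.

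Next I would analyze $C(n_1,n_2)$ to show it is generically nonzero for $n_1\neq n_2$. Using integration by parts together with $T_p = 1/\Delta f$, one obtains $I_{1,n,n} = T_p^2/2$ and $I_{1,i,n} = jT_p^2/[2\pi(i-n)]$ for $i\neq n$, so $C(n_1,n_2)$ is a finite sum of nonzero terms and in general does not vanish (this is the main point: the same $f_{e,t,i}$ appears across all columns, so the columns of $\boldsymbol{N}_t$ are correlated through the coefficients $I_{1,i,n}$). Setting for instance $m_1=m_2$ and $n_1\neq n_2$ then produces a nonzero off-diagonal entry in the covariance matrix of $\mathrm{vec}(\boldsymbol{N}_t)$, which by definition means the noise is colored. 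The structure of the result also foreshadows the whitening step used later: the covariance factors as a Kronecker-type product of a matrix indexed by $(n_1,n_2)$ with phase factors in $(m_1,m_2)$, a fact worth recording for the algorithmic section.

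The main obstacle is verifying that $C(n_1,n_2)$ does not accidentally vanish for every $n_1\neq n_2$ (which would make $\boldsymbol{N}_t$ white despite the shared randomness). I would handle this by giving the explicit closed form of $I_{1,i,n}$ above, observing that $|I_{1,n,n}|=T_p^2/2$ dominates $|I_{1,i,n}|=T_p^2/[2\pi|i-n|]$ for $i\neq n$, so when $n_1\neq n_2$ the sum $C(n_1,n_2)$ contains the two dominant cross-terms $I_{1,n_1,n_1} I_{1,n_1,n_2}^{*}$ and $I_{1,n_2,n_1} I_{1,n_2,n_2}^{*}$, each of magnitude $T_p^4/[4\pi|n_1-n_2|]$, together with strictly smaller residual terms; a simple magnitude bound then rules out cancellation. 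This completes the argument that $\boldsymbol{N}_t$ has a non-diagonal covariance matrix and is therefore colored.
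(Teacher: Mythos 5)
Your proposal follows essentially the same route as the paper: compute the second moments of $\boldsymbol{N}_t$ using the i.i.d.\ zero-mean Gaussian assumption on $\{f_{e,t,i}\}$, note that only the $i_1=i_2$ terms survive so each covariance entry is proportional to $\sum_{i} I_{1,i,n_1} I_{1,i,n_2}^{*}$ times phase factors in $(m_1,m_2,n_1,n_2)$, and conclude that the covariance matrix of $\mathrm{vec}(\boldsymbol{N}_t)$ is non-diagonal, hence the noise is colored; your constant $4\pi^2|\beta|^2\sigma_t^2/T_p^2$ matches the paper's $4\pi^2\|\alpha\|^2\sigma_t^2 E/N$. Your additional verification that $\sum_i I_{1,i,n_1}I_{1,i,n_2}^{*}$ does not vanish is a genuine strengthening over the paper (which simply asserts that $\boldsymbol{C}_t$ is not diagonal); just note that the cleanest way to finish is not a magnitude bound (which can become loose for large $N$) but the observation that the two dominant cross-terms are equal and purely imaginary, summing to $-jT_p^4/\bigl(2\pi(n_1-n_2)\bigr)$, while every residual term $T_p^4/\bigl[4\pi^2(i-n_1)(i-n_2)\bigr]$ is real, so the imaginary part of the sum is exactly $-T_p^4/\bigl(2\pi(n_1-n_2)\bigr)\neq 0$.
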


\begin{proof}

 According to the assumption in~\cite{CFO_tr} and~\cite{CFO_r}, the transmitting and the receiving frequency offsets obey i.i.d zero-mean Gaussian distribution. We use $\sigma_t$ to represent the standard deviation of transmitting frequency offset and $\sigma_r$ to represent the standard deviation of receiving frequency offset, i.e.
\begin{equation}
f_{e,t,n} \sim \mathcal N ( 0,\sigma_t^2 ),\quad \forall n,
\end{equation}
\begin{equation}
f_{e,r,m,n} \sim \mathcal N ( 0,\sigma_r^2 ),\quad \forall m,n,
\end{equation}

Denote $\boldsymbol{n}_t \in \mathbb{C}^{MN \times 1},\boldsymbol{n}_r \in \mathbb{C}^{MN \times 1}$ as the vector form of $\boldsymbol{N}_t,\boldsymbol{N}_r$, respectively. The covariance matrix of noise $\boldsymbol{n}_{r}$ is denoted as $\boldsymbol{C}_r \in \mathbb{C}^{MN \times MN}$, and the covariance matrix of noise $\boldsymbol{n}_{t}$ is denoted as $\boldsymbol{C}_t \in \mathbb{C}^{MN \times MN} $. We use $C_{t,m,n,p,q}$ to represent the $[(n-1)M+m]$-th row and the $[(q-1)M+p]$-th column of $\boldsymbol{C}_t$, which denotes cross correlation between $Y_{m,n}$ and $Y_{p,q}$. $C_{r,m,n}$ denotes $[(n-1)M+m]$-th row and the $[(n-1)M+m]$-th column of $\boldsymbol{C}_r$. $C_{t,m,n,p,q}$ and $C_{r,m,n}$ are
\begin{equation}
\begin{split}
& C_{t,m,n,p,q} =
4 \pi^2 \| \alpha \|^2 \sigma_t^2 \frac{E}{N} \sum_{i=1}^{N}  e^{j (\phi_{n,q}^r - \phi_{m,p}^\theta)} I_{1,i,n} I_{1,i,q}^\text{*}.
\end{split}\label{eq1}
\end{equation}
\begin{equation}
\begin{split}
& C_{r,m,n} =   4 \pi^2 \| \alpha \|^2 \sigma_r^2 \frac{E}{N}  \bigg \{ T_p^2  \frac{4 r^2}{c^2} + \frac{2 r}{c} \sum_{i=1}^{N}  \big [ e^{-j \phi_{i,n}^\theta} I_{1,i,n}
\\ &  + e^{j \phi_{i,n}^\theta} I_{1,i,n}^\text{*} \big ] 
+ \sum_{i_1=1}^{N} \sum_{i_2=1}^{N} e^{-j \phi_{i_1,i_2}^\theta} I_{1,i_1,n} I_{1,i_2,n}^\text{*}  \bigg \},
\end{split}
\end{equation}
where
\begin{align}
    & \phi_{n,q}^r = 2 \pi (n-q) \Delta f \frac{2r}{c}, \\
    & \phi_{m,p}^\theta = 2 \pi (m-p)f_0 d \frac{\sin(\theta)}{c}.
\end{align}

The off-diagonal elements of $\boldsymbol{C}_r$ are $0$, and $\boldsymbol{C}_r$ is a diagonal matrix with different diagonal elements. $\boldsymbol{C}_t$ is not diagonal, and thus $\boldsymbol{N}_{t}$ is a colored noise. 
    
\end{proof}
Besides, from Eq.(\ref{eq1}), $C_{t,m,n,p,q}$ equals $C_{t,m',n,p',q}$ if $n = q$, $m-p = m' - p'$, which shows that $\boldsymbol{C}_t$ is a block Toeplitz matrix. Moreover, each block in $\boldsymbol{C}_t$ is a rank-$1$ matrix, which means that $\boldsymbol{C}_t$ is also a singular matrix.

{In physical mean, since the signal from a single transmitting antenna is received by all receiving antennas, the carrier frequency offset of that transmitting antenna influences the signals at all receiving points. This correlation leads to the noise being 'colored', thus impacting the signal processing and analysis.}


\section{ Signal Processing of Equalized Colored Noise}\label{sec4}




From the analysis in Section~\ref{sec2}, the noises $\boldsymbol{N}_r$ brought by the receiving frequency offsets are independent of the matched filter outputs. However, the noises $\boldsymbol{N}_t$ are correlated
at different matched filter outputs. Additional colored noise will deteriorate the accuracy of the target estimation~\cite{colored_noise,colored_noise2,colored_noise3}. Therefore, we need to mitigate the colored noise or whiten the colored noise according to the characteristics of colored noise. Since the colored noise matrix $\boldsymbol{C}_t$ is singular, it is inconvenient to use a transformation matrix to whiten $\boldsymbol{C}_t$. Therefore, we consider mitigating the colored noise with a fourth-order cumulant of the array data~\cite{four_level}. Moreover, we also consider sparse-signal denoising algorithms for the colored noise~\cite{IST,ANM,ANM2,ANM3,ANM4}.

\subsection{Whitening Colored Noise with the Fourth-Order Cumulant}

The $\gamma$-th order cumulant of $\boldsymbol{x} = [ x_1,x_2, \cdots, x_n]^\text{T}$ is defined as 
\begin{equation}
c_{\gamma_1, \cdots, \gamma_n} (\boldsymbol{x} )= \left.(-j)^\gamma \frac{\partial^\gamma \Psi\left(w_1, \cdots, w_n\right)}{\partial w_1^{\gamma_1} \cdots w_n^{\gamma_n}}\right|_{w_1=\ldots=w_n=0},
\end{equation}
where
\begin{align} 
\Psi \left(w_1, \cdots, w_n\right) &=  \ln \mathbb{E} \left\{e^{ j\left(w_1 x_1+w_2 x_2+\ldots+w_n x_n\right)}\right\},\\
\sum_{k=1}^{n} \gamma_k &= \gamma.
\end{align}
If $\gamma_k \in \{0,1\}, \forall k \in [1,n]$, we then use $c_\gamma$ to represent $c_{\gamma_1, \cdots, \gamma_n}$. According to~\cite{c4_1,c4_2}, ${c}_4 (x_i,x_j,x_p,x_q)$ can be represented as 
\begin{equation}
\begin{split}
& c_4 (x_i,x_j,x_p,x_q) =  \mathbb{E} \{ x_i x_j x_p^\text{*} x_q^\text{*} \} - \mathbb{E} \left  \{ x_i x_p^\text{*} \right \} \mathbb{E} \left  \{ x_j x_q^\text{*} \right \}
\\ &  - \mathbb{E} \{ x_i x_q^\text{*} \} \mathbb{E} \left \{ x_j x_p^\text{*}  \right \} - \mathbb{E} \{ x_i x_j \} \mathbb{E} \left \{ x_p^\text{*} x_q^\text{*}  \right \},
\end{split}
\end{equation}
where $i,j,p,q \in [1,MN] $ denote the $i,j,p,q$-th element of $\boldsymbol{x}$, respectively. $\mathbb{E} \{ \cdot \}$ denotes the expectation. The model is then transferred into a vector form as

\begin{equation}
\boldsymbol{y} = \beta \boldsymbol{a}(\theta,r) + \boldsymbol{n}_t + \boldsymbol{n}_r + \boldsymbol{n}_0,\label{eq2}
\end{equation}
where $\boldsymbol{y},\boldsymbol{n}_t,\boldsymbol{n}_r,\boldsymbol{n}_0$ are the vector form of $\boldsymbol{Y},\boldsymbol{N}_t,\boldsymbol{N}_r,\boldsymbol{N}_0$, respectively. $\boldsymbol{a}(\theta,r) \in \mathbb{C}^{MN \times 1}$ is defined as
\begin{equation}
    \boldsymbol{a}(\theta,r) = \boldsymbol{a}_t(\theta,r) \otimes \boldsymbol{a}_r(\theta).
\end{equation}

We can then obtain the fourth-order cumulant of $\boldsymbol{y}$ as follows
\begin{equation}
\begin{split}
c_4 (y_i,y_j,y_p,y_q)  = &
c_4 ( \beta a_i(\theta,r) + n_i , \beta a_j(\theta,r) + n_j  ,\\ & \beta a_p(\theta,r) + n_p , \beta a_q(\theta,r) + n_q ),
\end{split}
\end{equation}
where $y_i$ denotes the received signal at the $i$-th receiving antenna. $a_i(\theta,r)$ denotes the $i$-th entry of $\boldsymbol{a}(\theta,r)$. Since $i,j,p,q \in [1,MN]$, $c_4 (y_i,y_j,y_p,y_q)$ totally has $M^4 N^4$ kinds of value for the receiving antenna array. We then build a matrix $\boldsymbol{C}_4 \in \mathbb{C}^{M^2 N^2 \times M^2 N^2} $ with the $M^4 N^4$ values. The $[(i-1)MN+j]$-th row and the $[(p-1)MN+q]$-th column of $\boldsymbol{C}_4$ is
\begin{equation}
    \boldsymbol{C}_4 [ (i-1)MN + j, (p-1)MN + q] = c_4 (y_i,y_j,y_p,y_q),
\end{equation}



According to~\cite{c4_1}, $\boldsymbol{C}_4$ can be further represented as
\begin{equation}
    \boldsymbol{C}_4 = h (\boldsymbol{a}(\theta,r) \otimes \boldsymbol{a}^\text{*}(\theta,r)) (\boldsymbol{a}(\theta,r) \otimes \boldsymbol{a}^\text{*}(\theta,r))^\text{H},
\end{equation}
where $h$ is a constant concerning the target coefficient. Algorithms based on subspace decomposition can be applied on $\boldsymbol{C}_4$, and the angle and range of the target can be estimated with the eigenvectors. We first apply eigenvalue decomposition (EVD) on $\boldsymbol{C}_4$ to obtain the eigenvalues $\lambda_i,i=1,2,\cdots,MN$ and the corresponding eigenvectors $\boldsymbol{v}_i,i=1,2,\cdots,MN$. Assume that $\lambda_1 > \lambda_2 > \cdots > \lambda_{MN}$, and {the noise space $\boldsymbol{U}_n$ is denoted as}

\begin{equation} \label{subspace1}
    \boldsymbol{U}_n = \big[\boldsymbol{v}_1,\boldsymbol{v}_2,\cdots,\boldsymbol{v}_{MN} \big]. 
\end{equation}

{The spectrum concerning the target DOA and the target range can be then obtained as} 

\begin{equation} \label{subspace3}
    \boldsymbol{S}(\theta,r) = \frac{1}{(\boldsymbol{a}(\theta,r) \otimes \boldsymbol{a}(\theta,r)^\text{*}) \boldsymbol{U}_n \boldsymbol{U}_n^{\text{H}} (\boldsymbol{a}(\theta,r) \otimes \boldsymbol{a}(\theta,r)^\text{*})^{\text{H}}}.
\end{equation}

The target DOA and target range are then estimated with the coordinates of the peak of $\boldsymbol{S}(\theta,r)$.

\subsection{Denoising with Atomic Norm Minimization}
In this part, we consider sparse-signal denoising algorithms for the colored noise. According to (\ref{eq2}), the range and angle estimation problem can be written as a basis pursuit denoising problem as follows 
\begin{equation}
    \min_{x} \frac{1}{2} \left \| \boldsymbol{y} - \boldsymbol{A} \boldsymbol{x} \right \|_2^2 + \tau \left \| \boldsymbol{x} \right \|_1,\label{eq3}
\end{equation}
where $\boldsymbol{A}$ is a dictionary matrix defined as
\begin{equation}
\boldsymbol{A} = [\boldsymbol{a}(\theta_1,r_1), \boldsymbol{a}(\theta_2,r_2), \cdots, \boldsymbol{a}(\theta_{N_d},r_{N_d})]. 
\end{equation}
$N_d$ denotes the number of columns of dictionary matrix $\boldsymbol{A}$. $\theta_i$ and $\theta_j$ can be the same when $i \neq j$, which represents the same angle with different range, and likewise $r_i$ and $r_j$. $\tau$ is the regularization parameter and represents the strength of the sparsity constraint. $\boldsymbol{x}$ is a sparse vector. $x_k$, the $k$-th element of $\boldsymbol{x}$, equals $\beta$ if the $k$-th column of $\boldsymbol{A}$ represents the steering vector of true range and angle of the target and equals $0$ otherwise. 

An iterative soft thresholding (IST) algorithm is a typical algorithm for the problem~\cite{IST}. However, its performance is limited by grid mismatches. To improve estimation performance in CS problems, ANM is proposed \cite{ANM,ANM2,ANM3,ANM4}. Without constructing a dictionary matrix or discretizing spatial or range domain, ANM works as a gridless method and reconstructs sparse signals with convex optimization. For the problem in (\ref{eq3}), denoising ANM is proposed as an effective method. The atomic set is defined as 
\begin{equation}
\begin{split}
{\mathcal{A}} =  \left \{ \boldsymbol{a}(\theta,r), \forall \theta \in \left [ -\frac{\pi}{2},\frac{\pi}{2} \right ] , \forall r \in \left [ 0,r_{max} \right ]  \right \}.
\end{split}
\end{equation}

The atomic norm is then defined as
\begin{equation}
\begin{split}
\|\boldsymbol{x}\|_{{\mathcal{A}}} & \triangleq \inf \{ t>0:\boldsymbol{x} \in t  \mathrm{conv} ( \mathcal{A}) \}
\\ & = \inf \left\{\sum_k t_k \mid \boldsymbol{x}=\sum t_k {\boldsymbol{a}} (\theta_k, r_k ) \right \}.
\end{split}
\end{equation}

Denote $L$ as the total number of pulses. Collect the signal $\boldsymbol{y}$ of all $L$ pulses to form $\boldsymbol{X} \in \mathbb{C}^{MN \times L}$, which can be represented as

\begin{equation}
    \boldsymbol{X} = \boldsymbol{y} \times \boldsymbol{1}_{1 \times L} + \tilde{\boldsymbol{N}_t} + \tilde{\boldsymbol{N}_r} + \tilde{\boldsymbol{N}_0},
\end{equation}
where $\boldsymbol{1}_{1 \times L}$ denotes a row matrix with all the $L$ elements being $1$. Each column of $\tilde{\boldsymbol{N}_t}, \tilde{\boldsymbol{N}_r}, \tilde{\boldsymbol{N}_0}$ represent the additional noise $\boldsymbol{n}_t, \boldsymbol{n}_r, \boldsymbol{n}_0$ at each pulse.

According to Proposition~(1) in~\cite{ANM_2d}, an approximation of $\| \boldsymbol{x} \|_{\mathcal{A}}$ can be obtained as $\| \boldsymbol{x} \|_{\mathcal{T}}$ from a semi-definite program. For $\boldsymbol{X} \in \mathbb{C}^{MN \times L}$, $\| \boldsymbol{X} \|_{\mathcal{T}}$ is defined as 
\begin{equation}
    \| \boldsymbol{X} \|_{\mathcal{T}} = \min_{\boldsymbol{T},\boldsymbol{P}} \left \{ \frac{1}{2} \mathrm{Tr} (\mathrm{S}(\boldsymbol{T})) + \frac{1}{2} \mathrm{Tr} ( \boldsymbol{P} ) \mid 
\left[\begin{array}{cc}
\mathcal{S}(\boldsymbol{T}) & \boldsymbol{X} \\
\boldsymbol{X}^\text{*} & \boldsymbol{P}
\end{array}\right] \succeq 0 \right \},
\end{equation}
where $\mathrm{S} ( \boldsymbol{T} ) \in \mathbb{C}^{NM \times NM}$ is an two-fold Toeplitz matrix consisting of $N \times N$ block Toeplitz matrices, i.e.,
\begin{equation}
\mathcal{S}(\boldsymbol{T})=\left[\begin{array}{cccc}
\boldsymbol{T}_0 & \boldsymbol{T}_{-1} & \cdots & \boldsymbol{T}_{-\left(n_1-1\right)} \\
\boldsymbol{T}_1 & \boldsymbol{T}_0 & \cdots & \boldsymbol{T}_{-\left(n_1-2\right)} \\
\vdots & \vdots & \vdots & \vdots \\
\boldsymbol{T}_{n_1-1} & \boldsymbol{T}_{n_1-2} & \cdots & \boldsymbol{T}_0
\end{array}\right].
\end{equation}
Block $\boldsymbol{T}_l \in \mathbb{C}^{M \times M}$ is defined as 
\begin{equation}
\boldsymbol{T}_l=\left[\begin{array}{cccc}
x_{l, 0} & x_{l,-1} & \cdots & x_{l,-\left(n_2-1\right)} \\
x_{l, 1} & x_{l, 0} & \cdots & x_{l,-\left(n_2-2\right)} \\
\vdots & \vdots & \vdots & \vdots \\
x_{l, n_2-1} & x_{l, n_2-2} & \cdots & x_{l, 0}
\end{array}\right],
\end{equation}
where $\boldsymbol{x}_l$ is the $l$-th row of $\boldsymbol{T}$, and $x_{l,p}$ represents the $p$-th element of $\boldsymbol{x}_l$. 

Therefore, with the above transformation, the denoising problem can be reformulated as
\begin{equation}
\begin{aligned}\min_{\boldsymbol{\hat{X}},\boldsymbol{T},\boldsymbol{P}} &  \frac{1}{2} \mathrm{Tr} (\mathrm{S}(\boldsymbol{T})) + \frac{1}{2} \mathrm{Tr} ( \boldsymbol{P} ) \\
\text { s.t. } &\left[\begin{array}{cc}
\mathcal{S}(\boldsymbol{T}) & \boldsymbol{\hat{X}} \\
\boldsymbol{\hat{X}}^\text{*} & \boldsymbol{P}
\end{array}\right] \succeq 0 ,
\\ & \quad \| \boldsymbol{X} - \boldsymbol{\hat{X}} \|_F^2 \leq \tau.
\end{aligned}
\end{equation}
$\tau$ is the regularization parameter, which determines the strength of the sparsity constraint. $\boldsymbol{\hat{X}}$ is the denoised signal matrix. {Target estimation can be achieved by employing subspace methods on $\boldsymbol{X}$. We first use EVD on $\hat{\boldsymbol{X}}$ to obtain the eigenvalues
			 $\lambda_i,i=1,2,\cdots,MN$ and the corresponding eigenvectors $\boldsymbol{v}_i,i=1,2,\cdots,MN$. Assume that $\lambda_1 > \lambda_2 > \cdots > \lambda_{MN}$, and the noise space $\boldsymbol{U}_n$ is denoted as}
			
			\begin{equation}
				\boldsymbol{U}_n = \big[\boldsymbol{v}_1,\boldsymbol{v}_2,\cdots,\boldsymbol{v}_{MN} \big]. 
			\end{equation}
			
			{The spectrum concerning the target DOA and the target range can be then obtained as }
			
			\begin{equation} \label{subspace2}
				\boldsymbol{S}(\theta,r) = \frac{1}{\boldsymbol{a}(\theta,r) \boldsymbol{U}_n \boldsymbol{U}_n^{\text{H}} \boldsymbol{a}^{\text{H}}(\theta,r)}.
			\end{equation}
			
			{The target DOA and target range correspond to the coordinates of the peak of $\boldsymbol{S}(\theta,r)$.}

\section{The CRLB for Range-Angle Estimation}\label{sec3}
For the model (\ref{eq16}), we derive the CRLB for the estimation of both range and angle in this section. According to the vector form model (\ref{eq2}), $\boldsymbol{y}$ obeys Gaussian distribution, i.e.,
\begin{equation}
    \boldsymbol{y} \sim \mathcal 
    {CN} ( \beta \boldsymbol{a}(\theta,r), \boldsymbol{C} ),
\end{equation}
where 
\begin{equation}
\begin{split}
    \boldsymbol{C} \triangleq & \mathbb {E} \left \{ (\boldsymbol{y} - \beta \boldsymbol{a}(\theta,r) ) (\boldsymbol{y} - \beta \boldsymbol{a}(\theta,r) ) ^\text{H} \right \}
\\ = & \boldsymbol{C}_0 +\boldsymbol{C}_t + \boldsymbol{C}_r,
\end{split}
\end{equation}
$\boldsymbol{C}_0\triangleq \mathbb{E} \left \{ \boldsymbol{n}_0 \boldsymbol{n}_0^\text{H} \right \}$, $\boldsymbol{C}_t \triangleq \mathbb{E} \left \{ \boldsymbol{n}_t \boldsymbol{n}_t^\text{H} \right \} \triangleq \mathbb{E} \left \{ \boldsymbol{n}_r \boldsymbol{n}_r^\text{H} \right \}$ denote the covariance matrices of $\boldsymbol{n}_0$, $\boldsymbol{n}_t$ and $\boldsymbol{n}_r$, respectively. Assume that $\boldsymbol{n}_0 \sim \mathcal {CN} ( 0,\sigma_0^2 )$, we have $\boldsymbol{C}_0 = \sigma_0^2 \boldsymbol{I}$. $\boldsymbol{C}_r$ is a diagonal matrix, and the $((n-1)M+m)$-th row, $((n-1)M+m)$-th column of $\boldsymbol{C}_r$ is
\begin{equation}
\begin{split}
 & \boldsymbol{C}_{r,(n-1)M+m,(n-1)*M+m} =   4 \pi^2 \| \beta \|^2 \sigma_r^2   \bigg \{ T_p^2  \frac{4 r^2}{c^2} +
 \\ & \qquad \qquad  \frac{4 r}{c} \sum_{i=1}^{N} \mathcal{R} \{ e^{-j 2 \pi (i-n) f_\theta} I_{1,i,n} \} + 
\\   & \qquad \qquad  \sum_{i_1=1}^{N} \sum_{i_2=1}^{N} e^{-j 2 \pi (i_1 - i_2) f_\theta} I_{1,i_1,n} I_{1,i_2,n}^\text{*}  \bigg \}.
\end{split}
\end{equation}
The $((n-1)M+m)$-th row, $((b-1)M+a)$-th column of $\boldsymbol{C}_t$ is 
\begin{equation}
\begin{split}
 &  \boldsymbol{C}_{t,(n-1)M+m,(b-1)M+a}
 \\ &  \quad = 4 \pi^2 \sigma_t^2 \sum_{i=1}^{N} e^{j 2 \pi [ (n-a) \Delta f \frac{2 r}{c} - (m-a) f_\theta ]} I_{1,i,n} I_{1,i,b}^\text{*},
\end{split}
\end{equation}

With the obtained covariance matrix, the probability
density function of Gaussian distribution $ \boldsymbol{y} \sim \mathcal CN ( \beta \boldsymbol{a}(\theta,r), \boldsymbol{C} )$ can be expressed as
\begin{equation}
\begin{split}
& f(\boldsymbol{y}) = \frac{1}{\pi^{MN} \| C \|^{1/2}} 
 e^{  - \frac{1}{2} (\boldsymbol{y} - \beta \boldsymbol{a}(\theta,r))^\text{H} \boldsymbol{C}^{-1} (\boldsymbol{y} - \beta \boldsymbol{a}(\theta,r))},
\end{split}
\end{equation}
where $\| \cdot \|$ denotes the determinant.

To derive the CRLB, Fisher information matrix (FIM) must be first obtained, where the second derivative of $\mathrm{ln} p(\boldsymbol{y} ; \boldsymbol{s})$ to all variables in $\boldsymbol{s}$ are needed. $\boldsymbol{s}$ is defined as
\begin{equation}
\boldsymbol{s} = \left[ \boldsymbol{\theta}^\text{T} \quad \boldsymbol{r}^\text{T} \quad \boldsymbol{\beta}^\text{T}  \right]^\text{T}.
\end{equation}
We denote $\boldsymbol{F} (s_i,s_j)$ as the second derivative of $\mathrm{ln} p(\boldsymbol{y} ; \boldsymbol{s})$ to $s_i$ and $s_j$, where $s_i$ and $s_j$ are variables in $\boldsymbol{s}$. $\boldsymbol{F} (s_i,s_j)$ is
\begin{equation}
\begin{split}
& \boldsymbol{F}(s_i,s_j)= -\mathbb{E} \{ \frac{\partial^2 \mathrm{ln} f(\boldsymbol{y} ; \boldsymbol{s})}{\partial s_i \partial s_j}\} .
\end{split}
\end{equation}

After calculation and simplification, we obtain $\boldsymbol{F}(r,r)$ as follows

\begin{equation}
\begin{split}
 \boldsymbol{F}(r,r) & = 
\bigg \{ \mathcal{R} \left \{ \|\beta\|^2 \frac{\partial \boldsymbol{a}(\theta,r)}{\partial r} \boldsymbol{C}^{-1} \frac{\partial \boldsymbol{a}(\theta,r)}{\partial r} \right \} 
\\ & \qquad \qquad + \frac{1}{2} \mathrm{Tr} \{ \boldsymbol{C}^{-1} \frac{\partial \boldsymbol{C}}{\partial r} \boldsymbol{C}^{-1} \frac{\partial \boldsymbol{C}}{\partial r} \} \bigg \}^{-1}.
\end{split}
\end{equation}

As claimed before, since $\boldsymbol{N}_t$ does not interfere with the estimation of angle, the covariance matrix of $\boldsymbol{C}_t$ should not be included in $\boldsymbol{C}$ when deriving the CRLB of angle, i.e.,
\begin{equation}
\begin{split}
\boldsymbol{F}(\theta,\theta) & =  \mathcal{R} \left \{ \|\beta\|^2 \frac{\partial \boldsymbol{a}(\theta,r)}{\partial \theta} \boldsymbol{\tilde{C}}^{-1} \frac{\partial \boldsymbol{a}(\theta,r)}{\partial \theta} \right \} 
\\ & + \frac{1}{2} \mathrm{Tr} \{ \boldsymbol{\tilde{C}}^{-1} \frac{\partial \boldsymbol{\tilde{C}}}{\partial \theta} \boldsymbol{\tilde{C}}^{-1} \frac{\partial \boldsymbol{\tilde{C}}}{\partial \theta} \},
\end{split}
\end{equation}
where
\begin{equation}
\boldsymbol{\tilde{C}} = \boldsymbol{C}_0 + \boldsymbol{C}_r.
\end{equation}

 Finally, the CRLB with respect to $\boldsymbol{s}_i$ is 
\begin{equation} 
\mathrm{CRLB} \{\boldsymbol{s}_i \} = \left[ \boldsymbol{F}^{-1} \right]_{i,i} , 
\end{equation} 
where $ \boldsymbol{F}^{-1} $ denotes the inverse of $\boldsymbol{F}$ and $ \left[ \cdot \right ] _{i,i}$ denotes the element at the $i$-th column and the $i$-th row of the matrix. The derivation of CRLB using $\boldsymbol{F}^{-1}$ requires that $\boldsymbol{F}$ is positive defined~\cite{FIM_singular}. However, since the value of entries corresponding to the lower bound of range in FIM greatly differs from those corresponding to the lower bound of angle, FIM can be singular, especially when the dimension of variables is high. Therefore, we use a lower bound of FIM to describe the performance of the estimation as~\cite{FIM_2}. 
\begin{equation}
    \mathrm{CRLB}\{s_i \} \geq {\boldsymbol{F}}_{i,i}^{-1}.
\end{equation}
{The derived CRLB simultaneously considers the influence of additional white noise, carrier frequency offsets and receiving frequency offsets. Moreover, the equalized noise caused by the frequency offsets can be colored. Unlike white noise, which has a diagonal covariance matrix due to its uncorrelated nature, colored noise is characterized by a covariance matrix with non-zero off-diagonal elements. The derived CRLB applies to the scenarios with colored noise caused by frequency offsets.}

\section{Simulation Results and Analysis}\label{sec5}

\begin{table}[h]
\renewcommand{\arraystretch}{1.2}
\caption{Simulation Parameters.}\label{table1}
\begin{tabular}{cc}
\hline
\multicolumn{1}{c}
{\textbf{Parameter}}                       & \multicolumn{1}{c}{Value} \\ \hline
The signal-to-noise ratio (SNR) of received signal             & $20$~dB                     \\
The number of transmitting antennas N                          & $4$                          \\
The number of receiving antennas M                             & $4$                          \\
The number of targets S                                        & $1$                          \\
The space between antennas d                                   & $0.5$ wavelengths             \\
The number of pulses $L$                        & $200$                       \\
The carrier frequency $f_0$                                     & $10$~G Hz                    \\
The default frequency difference $ \Delta f$               & $10$~k Hz                    \\
The default range of target $ r$               & $6000$~m                    \\
The default DOA of target $ \theta$               & $30$~°                    \\
The default standard deviation of transmitting   & \multirow{2}{*}{$500$~Hz} \\ 
frequency offset $\sigma_t$ &                     \\
The default standard deviation of receiving  & \multirow{2}{*}{$500$~Hz} \\ 
 frequency offset $\sigma_r$ &       \\ \hline 
\end{tabular}
\end{table}

Simulation parameters are given in Table~\ref{table1}. The number of Monte Carlo simulations is $ 10^3$. Attention that SNR in Table.~\ref{table1} refers to the power ratio between the signal and the white noise, where the noise caused by the frequency offset is not considered. Assume that the transmitting and receiving frequency offsets remain unchanged in one pulse duration and obey i.i.d Gaussian distribution among different pulses. We use $\sigma_t$ to represent the standard deviation of the transmitting frequency offset and $\sigma_r$ to represent the standard deviation of the receiving frequency offset, i.e.,
\begin{equation}
f_{e,t,n} \sim \mathcal N ( 0,\sigma_t^2 ),\quad \forall n,
\end{equation}
\begin{equation}
f_{e,r,m,n} \sim \mathcal N ( 0,\sigma_r^2 ),\quad \forall m,n,
\end{equation}

We first show the deterioration caused by frequency offsets. According to the model in (\ref{eq0}), the matched filter outputs are decided by $e^{j 2 \pi f_{e,r,m,n}\frac{2r}{c}}$ and $\int_{0}^{T_p} e^{-j 2 \pi (f_{e,t,i}-f_{e,r,m,n})t} dt$. When pulse duration is set to distinguish the signal of different frequencies exactly, i.e., $T_p = 1/\Delta f$, the integration $\int_{0}^{T_p} e^{-j 2 \pi (f_{e,t,i}-f_{e,r,m,n})t} dt$ is fixed when the ratio ${(f_{e,t,i}-f_{e,r,m,n})}/{\Delta f}$ is fixed. If the ratio ${r}/{r_{\max}}$ and ${f_{e,r,m,n}}/{\Delta f}$ are both fixed, $e^{j 2 \pi f_{e,r,m,n} \frac{2r}{c}}$ is fixed as well. In general, when the pulse duration of FDA-MIMO radar equals the reciprocal of $\Delta f$ and the ratio of targets' range to maximum detection range is unchanged, deterioration caused by frequency offsets is only decided by the ratio of the standard deviation of frequency offsets and the frequency difference, i.e., ${\sigma_t}/{\Delta f}$ and ${\sigma_r}/{\Delta f}$.

For example, in scene $1$, $\Delta f$ is set as $10$~kHz. The maximum unambiguous detection range is then $15$~km. The pulse duration is set as the minimum length needed to distinguish different frequencies, i.e., $0.1$~ms. In scene $2$, $\Delta f$ is set as $1$~kHz. The maximum detection range is $150$~km and the pulse duration is $1$~ms. When detecting a target at $6$~km away in scene $1$ with a standard deviation of transmitting frequency offset $1k$~Hz, the deterioration of frequency offsets equals that when detecting a target at $60$~km away in scene $2$ with a standard deviation of transmitting frequency offset $100$~Hz. In  Fig.~\ref{fig2}, real noise represents the difference between the matched filter outputs and the theoretical outputs, i.e., $\left[ \boldsymbol{y} - \beta \boldsymbol{a}(\theta,r) \right]$. Estimated noise represents the noise generated according to  (\ref{eq4}) and (\ref{eq5}). We use equalized SNR, i.e., the power ratio between signal and noise caused by frequency offsets, to describe the deterioration of frequency offsets. The equalized SNR under both scenes is simulated and shown in Fig.~\ref{fig2} {, and numerical values of the simulation result is shown in Table.}\ref{table2}. Obviously, the equalized SNR is the same with the same ${\sigma_t}/{\Delta f}$ in both scenes, which is consistent with the analysis above. 

\begin{figure}[h]
\centering
\includegraphics[width=3.6 in]{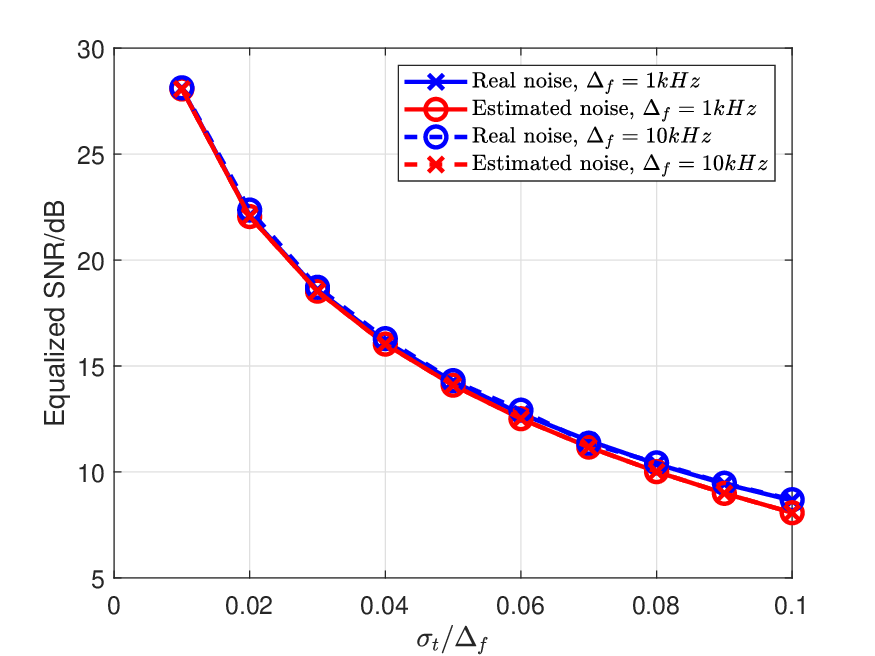}
\caption{Equalized SNR with a different standard deviation of transmitting frequency offset when $\Delta f = 1$~kHz and $10$~kHz.}
\label{fig2}
\end{figure}

\begin{table}[h]
\caption{Equalized SNR with different standard deviation of transmitting frequency offset.}\label{table2}
\centering
\renewcommand{\arraystretch}{1.4}
\begin{tabular}{|cll|cccc|}
\hline
\multicolumn{3}{|c|}{\multirow{4}{*}{$\frac{\sigma}{\Delta f}$}} & \multicolumn{4}{c|}{\multirow{2}{*}{\begin{tabular}[c]{@{}c@{}}Equalized SNR with transmitting frequency offset\end{tabular}}} \\
\multicolumn{3}{|c|}{}                         & \multicolumn{4}{c|}{}                                                                                                             \\ \cline{4-7} 
\multicolumn{3}{|c|}{}                         & \multicolumn{2}{c|}{$\Delta f=1kHz$}                                          & \multicolumn{2}{c|}{$\Delta f= 10kHz$}                  \\ \cline{4-7} 
\multicolumn{3}{|c|}{}                         & \multicolumn{1}{c|}{Estimation}     & \multicolumn{1}{c|}{Actual value}    & \multicolumn{1}{c|}{Estimation}    & Actual value    \\ \hline
\multicolumn{3}{|c|}{0.02}                     & \multicolumn{1}{c|}{22.05}          & \multicolumn{1}{c|}{22.09}           & \multicolumn{1}{c|}{22.05}         & 18.72           \\ \hline
\multicolumn{3}{|c|}{0.04}                     & \multicolumn{1}{c|}{16.03}          & \multicolumn{1}{c|}{16.15}           & \multicolumn{1}{c|}{16.03}         & 14.31           \\ \hline
\multicolumn{3}{|c|}{0.06}                     & \multicolumn{1}{c|}{12.51}          & \multicolumn{1}{c|}{12.74}           & \multicolumn{1}{c|}{12.51}         & 12.92           \\ \hline
\multicolumn{3}{|c|}{0.08}                     & \multicolumn{1}{c|}{10.01}          & \multicolumn{1}{c|}{10.38}           & \multicolumn{1}{c|}{10.01}         & 10.43           \\ \hline
\multicolumn{3}{|c|}{0.1}                      & \multicolumn{1}{c|}{8.07}           & \multicolumn{1}{c|}{8.65}            & \multicolumn{1}{c|}{8.07}          & 8.69            \\ \hline
\end{tabular}
\end{table}


When ${r}/{r_{max}} = 0.4$, ${\sigma_t}/{\Delta f} = 0.05$, the equalized SNR is $14.3 $~dB. When ${\sigma_t}/{\Delta f}$ rises to $0.1$, the equalized SNR is $8.7 $~dB. From the comparison between the estimated noise and the real noise, it is obvious that the power of noise is accurately estimated, which proves the effectiveness of the model.

We then show the relationship between the equalized SNR and the receiving frequency offset. Similar to the analysis above, when $T_p = {1}/{\Delta f}$, the equalized SNR only concerns with ${r}/{r_{max}}$ and ${\sigma_r}/{\Delta f}$. Let ${r}/{r_{max}} = 0.4$, the relation between the equalized SNR and the standard deviation of receiving frequency offset $\sigma_r$ is shown in Fig.~\ref{fig4} {, and numerical values of the simulation result is shown in Table.}\ref{table3}.

\begin{figure}[h]
\centering
\includegraphics[width=3.6 in]{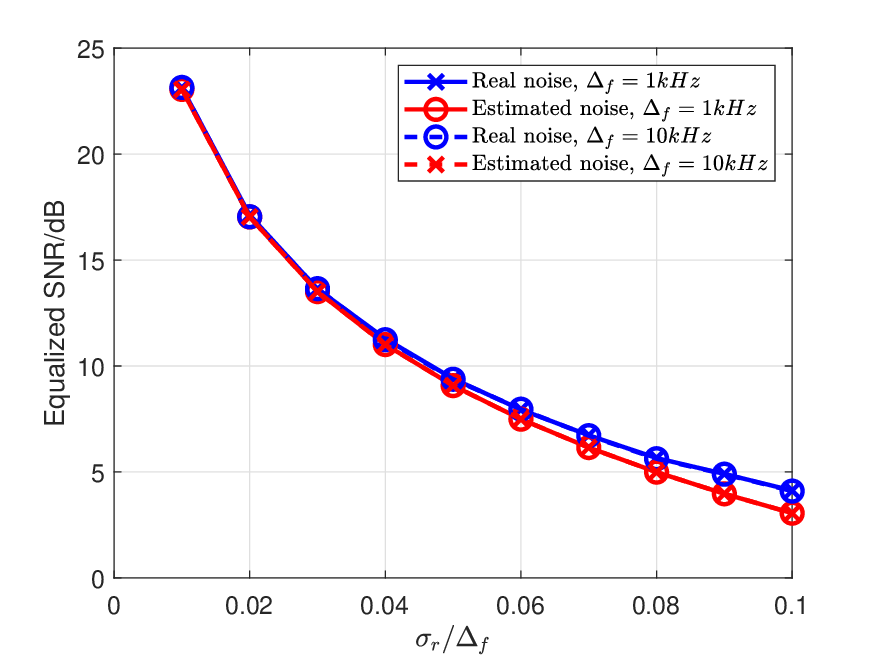}
\caption{Equalized SNR with a different standard deviation of receiving frequency offset when $\Delta f = 1$~kHz and $10$~kHz.}
\label{fig4}
\end{figure}

\begin{table}[h]
\caption{Equalized SNR with different standard deviation of receiving frequency offset.}\label{table3}
\centering
\renewcommand{\arraystretch}{1.4}
\begin{tabular}{|cll|cccc|}
\hline
\multicolumn{3}{|c|}{\multirow{4}{*}{$\frac{\sigma}{\Delta f}$}} & \multicolumn{4}{c|}{\multirow{2}{*}{\begin{tabular}[c]{@{}c@{}}Equalized SNR with receiving frequency offset\end{tabular}}} \\
\multicolumn{3}{|c|}{}                         & \multicolumn{4}{c|}{}                                                                                                          \\ \cline{4-7} 
\multicolumn{3}{|c|}{}                         & \multicolumn{2}{c|}{$\Delta f=1kHz$}                                         & \multicolumn{2}{c|}{$\Delta f= 10kHz$}                \\ \cline{4-7} 
\multicolumn{3}{|c|}{}                         & \multicolumn{1}{c|}{Estimation}    & \multicolumn{1}{c|}{Actual value}    & \multicolumn{1}{c|}{Estimation}   & Actual value   \\ \hline
\multicolumn{3}{|c|}{0.02}                     & \multicolumn{1}{c|}{17.03}         & \multicolumn{1}{c|}{17.12}           & \multicolumn{1}{c|}{17.03}        & 17.04          \\ \hline
\multicolumn{3}{|c|}{0.04}                     & \multicolumn{1}{c|}{11.01}         & \multicolumn{1}{c|}{11.29}           & \multicolumn{1}{c|}{11.01}        & 11.24          \\ \hline
\multicolumn{3}{|c|}{0.06}                     & \multicolumn{1}{c|}{7.49}          & \multicolumn{1}{c|}{7.94}            & \multicolumn{1}{c|}{7.49}         & 7.96           \\ \hline
\multicolumn{3}{|c|}{0.08}                     & \multicolumn{1}{c|}{4.99}          & \multicolumn{1}{c|}{5.65}            & \multicolumn{1}{c|}{4.99}         & 5.62           \\ \hline
\multicolumn{3}{|c|}{0.1}                      & \multicolumn{1}{c|}{3.05}          & \multicolumn{1}{c|}{4.12}            & \multicolumn{1}{c|}{3.05}         & 4.09           \\ \hline
\end{tabular}
\end{table}

From the comparison between Fig.~\ref{fig2} and Fig.~\ref{fig4}, when ${r}/{r_{max}} = 0.4$, the power of noise caused by receiving frequency offset is about $5$~dB higher than the power of noise caused by transmitting frequency offset with the same standard deviation.

We then show the equalized SNR with different ranges of targets. From the representation of covariance matrix $\boldsymbol{C}_t$ and $\boldsymbol{C}_r$, the power of noise brought from the receiving frequency offset concerned with the range of the target, while the power of noise brought from transmitting frequency offset does not. We simulate the equalized SNR with only transmitting frequency offset and only receiving frequency offset, respectively. The simulation results are shown in Fig.~\ref{fig5}.

\begin{figure}[h]
\centering
\includegraphics[width=3.6 in]{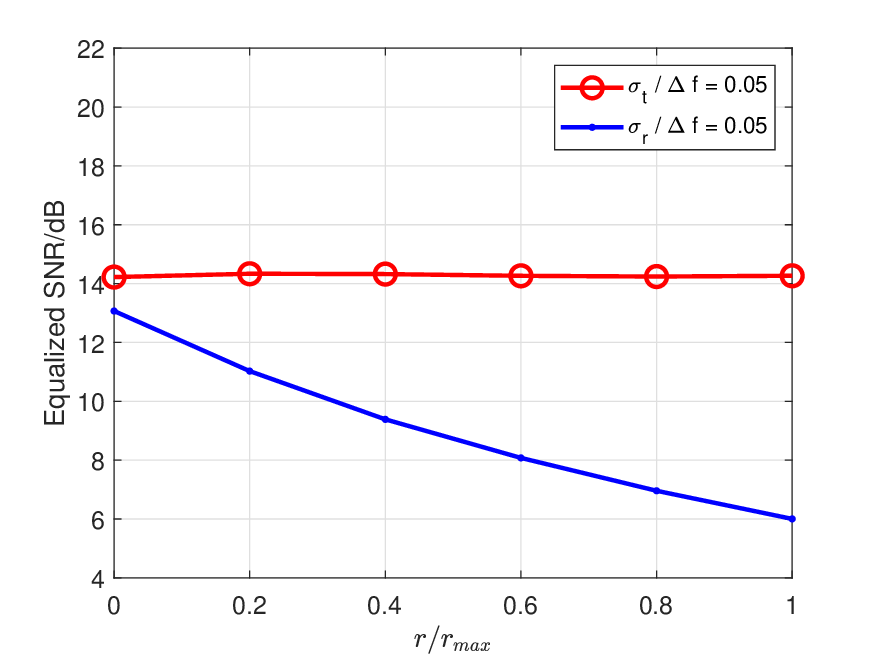}
\caption{equalized SNR with different ranges of target.}
\label{fig5}
\end{figure}

As shown in Fig.~\ref{fig5}, when only transmitting frequency offset exists, the equalized SNR is not influenced by the range of targets. With only receiving frequency offset, noise to signal ratio increases when the detected target is further. As the simulation result shows, when ${\sigma_r}/{\Delta f} = 0.05$, the difference between the maximum equalized SNR and the minimum equalized SNR can exceed $7 $~dB with different target ranges.

{Before comparing the estimation performances of different algorithms, we first show the used algorithms in the simulation. To the best of our knowledge, there are currently no studies working on analyzing the influence of the frequency offsets on FDA-MIMO system under the scenario where the frequency offsets are assumed to obey a certain distribution in different pulses. Therefore, for comparison, we adopt MUSIC algorithm, OMP algorithm, sparse signal denoising algorithm (atomic norm minimization) and the algorithms based on the fourth-order cumulant to address the parameter estimation problem. The latter two algorithms, in particular, are employed to try to mitigate the equalized colored noise caused by frequency offsets. The used algorithms are listed as follows}
\begin{itemize}
        \setlength{\parsep}{-10pt}
	\item {2D MUSIC:  Use 2D MUSIC algorithm to co-estimate the target range and DOA with the signal matrix $\boldsymbol{Y}$.}
	\item {2D MUSIC-C:  Use 2D MUSIC algorithm to co-estimate the target range and DOA with the fourth-order cumulant matrix $\boldsymbol{C}_4$.}
	\item {MUSIC-R: Use MUSIC algorithm to estimate the target DOA only with the phase differences between different rows in the signal matrix $\boldsymbol{Y}$.}
	\item {OMP:  Use OMP algorithm to co-estimate the target range and DOA with the signal matrix $\boldsymbol{Y}$.}
	\item {OMP-ANM: Use 2fold ANM algorithm to denoise the signal matrix $\boldsymbol{Y}$ first, and use OMP algorithm to co-estimate the target range and DOA with the denoised signal matrix.}
\end{itemize}
{The computational complexity of the proposed algorithms is shown in Table} \ref{Complexity} {, with $N_d$ representing the number of columns in the dictionary.}
	\begin{table}[H] \caption{Computational complexity}
 \centering
		\begin{tabular}{|c|c|} 
			\hline
			Algorithms                        & Computational Complexity                                     \\ \hline
			2D MUSIC           & \makecell[c]{$ 
\mathcal{O} \big( M^3 N^3 \big) $ } \\ \hline
			2D MUSIC-C       & \makecell[c]{$ \mathcal{O} \big( M^6 N^6 \big) $  }            \\ \hline
                MUSIC-R       & \makecell[c]{$ \mathcal{O} \big( M^3 \big) $  }            \\ \hline
			\makecell[c]{OMP} & $\mathcal{O} \big( MNLN_d \big)  $                                      \\ \hline
			OMP-ANM                               & $\mathcal{O} \big( (MN+L)^{3.5} + MNLN_d \big)$                           \\ \hline  
		\end{tabular}
	\label{Complexity}
	\end{table}
{The practical computational time of the algorithms are shown in Table.} \ref{Time} {. All the simulation results are obtained on a PC with Matlab R2021b with a 3.6 GHz Intel Core i7.}
\begin{table}[h] \caption{Computational Time}
\centering
\begin{tabular}{|c|cll|}
\hline
\multicolumn{1}{|l|}{} & \multicolumn{3}{l|}{Computational Time / s} \\ \hline
2D MUSIC             & \multicolumn{3}{c|}{0.040}                   \\ \hline
2D MUSIC-C         & \multicolumn{3}{c|}{6.155}                   \\ \hline
MUSIC-R   & \multicolumn{3}{c|}{0.011}                   \\ \hline
OMP                    & \multicolumn{3}{c|}{0.851}                    \\ \hline
OMP-ANM                    & \multicolumn{3}{c|}{26.30}                    \\ \hline
\end{tabular}
\label{Time}
\end{table}

{We then show the influence of the frequency offsets on estimating the range and angle of targets in Fig.}~\ref{fig6}\subref{fig6a} {and Fig.}~\ref{fig6}\subref{fig6b}{, respectively. We assume only one target and set the target range as ${r} =0.4r_{max}$ and the target DOA as $\theta = 30^{\circ}$. To emphasize the influence of frequency offsets and avoid the influence of additive white noise, the SNR is set as $50$~dB. Fig.}~\ref{fig6}\subref{fig6a} {shows the relation between the root mean square error (RMSE) of estimated range and the standard deviation of carrier frequency offset. In scenarios with a single target, OMP algorithm outperforms MUSIC algorithm across both low and high SNR ranges. However, in scenarios with two targets, the performance of OMP algorithm significantly deteriorates compared with MUSIC algorithm, to the extent that it is outperformed by MUSIC algorithm. The results indicate the sensitivity of OMP algorithm to the number of targets.}

{In single-target scenarios, the interference arises from both additive white noise and the equalized colored noise caused by frequency offsets. Colored noise affects the off-grid elements of the covariance matrix, impacting the eigenvector decomposition used by MUSIC algorithm. Hence, MUSIC is more sensitive to noise, particularly to the colored noise introduced by frequency offsets. As the number of targets increases, the mutual correlation among signals reflected by different targets becomes a significant factor. Considering that OMP algorithm uses the correlation between the signals to determine the target parameters, the additional complexity and interference introduced by multiple targets can lead to a significant challenge for OMP algorithm. Consequently, in scenarios with two or more targets, the OMP algorithm struggles to maintain the level of accuracy and robustness exhibited in single-target situations.}

\begin{figure}[h]
\subfloat[Range estimation]{
\label{fig6a}
\includegraphics[width=3.6in]{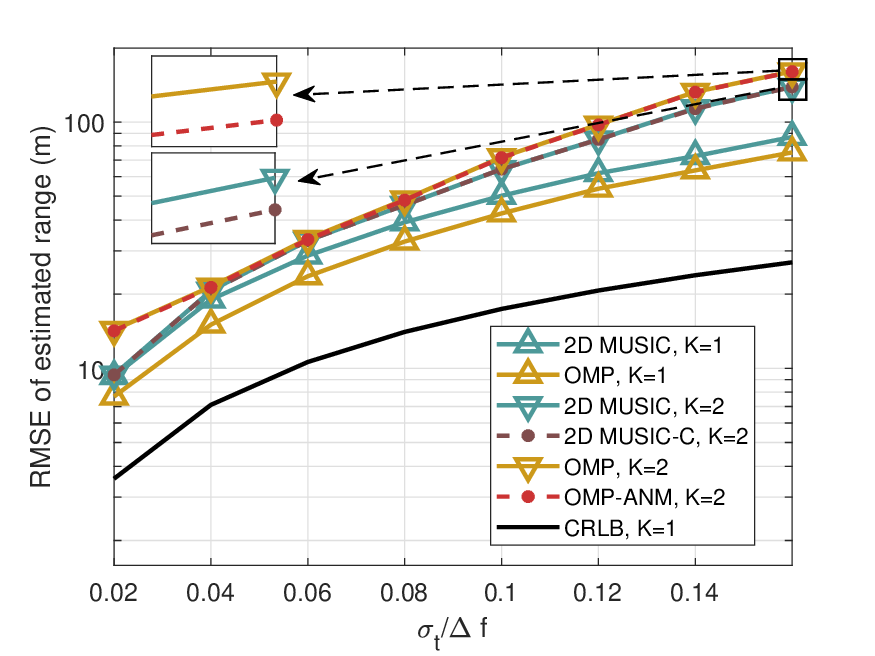}}
\quad
\subfloat[DOA estimation]{
\label{fig6b}  
\includegraphics[width=3.6 in]{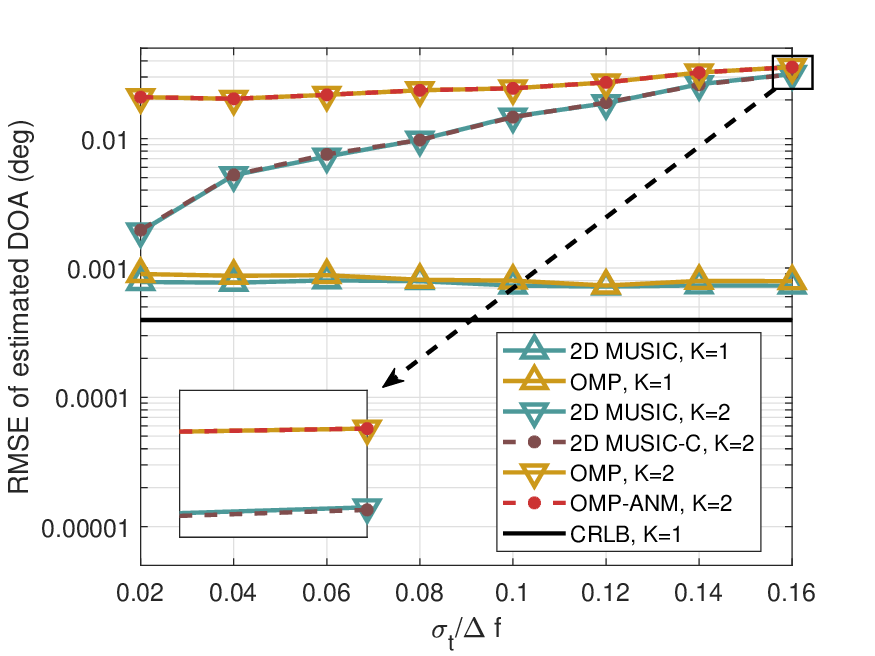}}
\caption{Influence of transmitting frequency offset on estimation.}
\label{fig6}
\end{figure}

{Compared with the distinct differences between the performances of OMP algorithm and MUSIC algorithm, the differences among algorithms within the same base category (MUSIC or OMP) are minimal. According to the simulation results in both Fig.}~\ref{fig6} \subref{fig6a} {and Fig.}~\ref{fig6} \subref{fig6b}{, the 2D-MUSIC algorithm using the fourth-order cumulant shows little improvement over the standard 2D-MUSIC algorithm. Similarly, the OMP algorithm based on atomic norm minimization offers marginal improvement compared to the conventional OMP algorithm. These results indicate that the denoising techniques we adopted show limited superiority over the common algorithms in the presence of frequency offsets.
}

{We then analyze the simulation results in Fig.}~\ref{fig6} \subref{fig6b}{, which shows the relation between the RMSE of the estimated target DOA and the carrier frequency offset. In single target scenario, the performances of the used algorithms remain basically unchanged. When the number of target equals $2$, the DOA estimation deteriorate with the increase of the transmitting frequency offset. The different simulation results in single and dual target scenarios comes from the the requirement of more phase information when estimating multiple targets. Both 2D MUSIC algorithm and OMP algorithm estimate the target range and DOA jointly with the phase differences among both the rows and columns of the matrix $\boldsymbol{Y}$. According to Proposition} \ref{prop1}{, the phase difference between rows is solely related to the target DOA and not influenced by carrier frequency offset, which indicates that the phase difference between rows is the primary influencing factor in the estimation of DOA. Therefore, the DOA estimation accuracy remains consistent even in the presence of significant carrier frequency offset. However, in scenarios with multiple targets, more phase information are required to distinguish different targets, which indicates that the phase differences between both rows and columns both have significant influence on the DOA estimation. As described in Proposition} \ref{prop1}{, the phase difference between columns concern with the transmitting frequency offset. Consequently, the DOA estimation performance will be affected by transmitting frequency offset in dual-target scenario.}

{To attain a more thorough comprehension of the influences of carrier frequency offset, we conduct an additional simulation as shown in Fig.}~\ref{fig60} {. MUSIC-R algorithm only uses the phase differences among the rows in $\boldsymbol{Y}$ to estimate the target DOA. It can be seen that the estimation performance with MUSIC-R algorithm remains unchanged with different strength of transmitting frequency offset. This observation confirms that the phase differences among the rows are not affected by transmitting frequency offset, corroborating Proposition} \ref{prop1} {. Moreover, in dual-target scenarios, MUSIC-R algorithm demonstrates superior performance compared to 2D-MUSIC algorithm. The comparison suggests that relying solely on the phase differences between rows for DOA estimation can be more effective than co-estimation algorithms in the scenario with multiple targets and transmitting frequency offset. 
}

\begin{figure}[h]
\centering
\includegraphics[width=3.6 in]{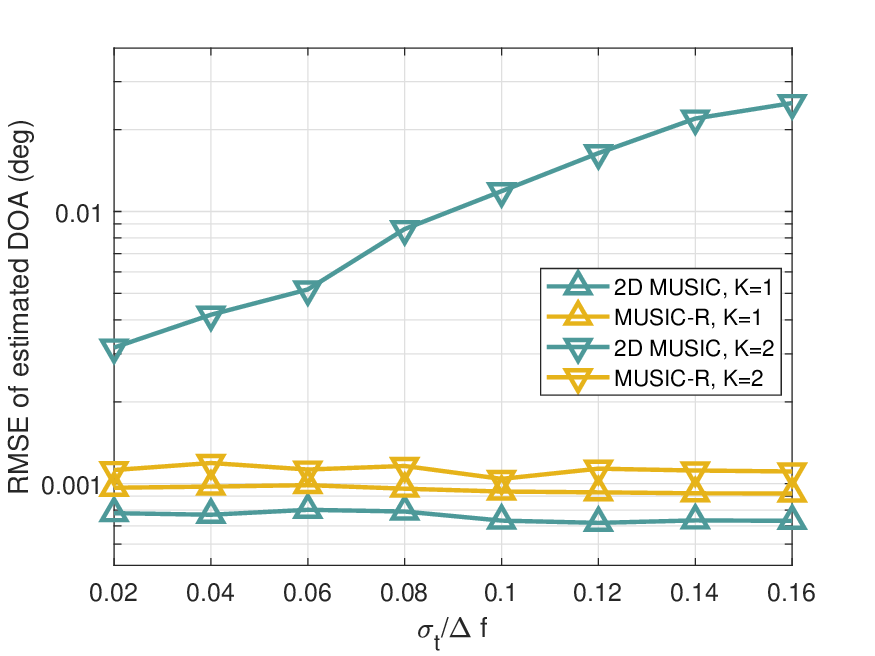}
\caption{Influence of transmitting frequency offset on estimation.}
\label{fig60}
\end{figure}

\begin{figure}[h]
\subfloat[Range estimation]{
\label{fig7a}
\includegraphics[width=3.6in]{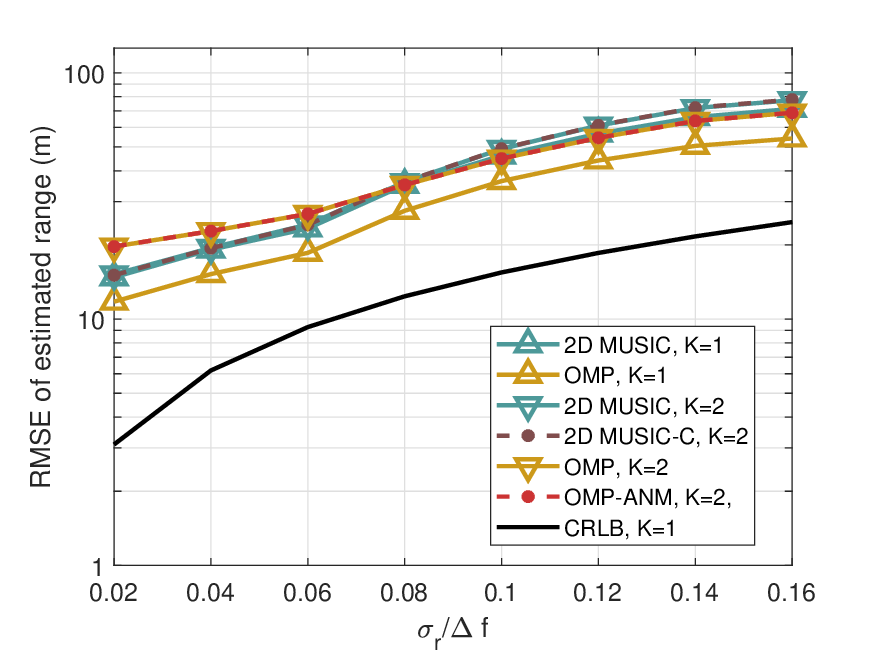}}
\quad
\subfloat[DOA estimation]{
\label{fig7b}  
\includegraphics[width=3.6 in]{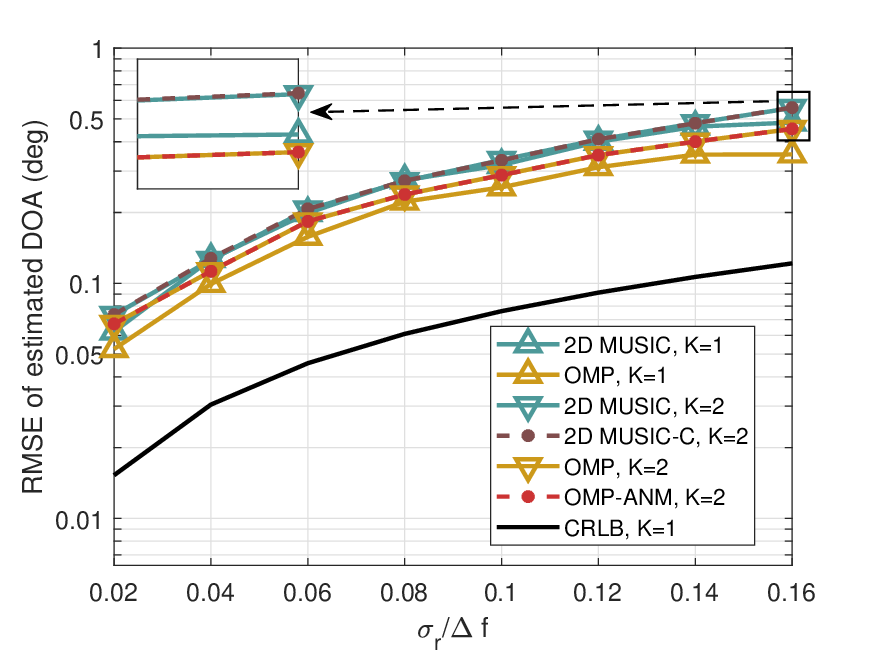}}
\caption{Influence of receiving frequency offset on estimation.}
\label{fig7}
\end{figure}

\begin{figure}[h]
\subfloat[SNR = $10$dB]{
\label{fig01a}
\includegraphics[width=3.6in]{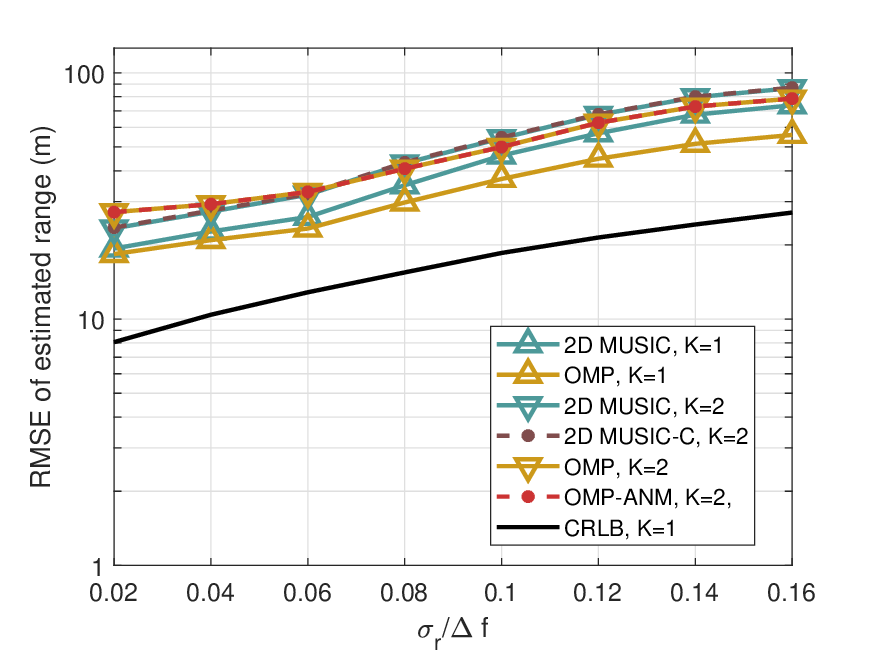}}
\quad
\subfloat[SNR = $0$dB]{
\label{fig01b}  
\includegraphics[width=3.6 in]{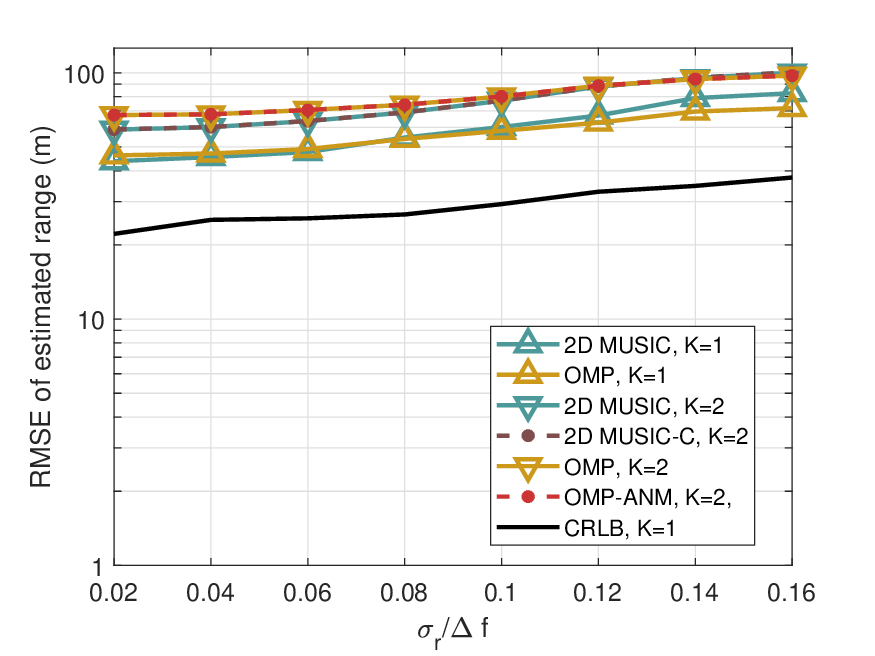}}
\caption{Influence of receiving frequency offset on range estimation with different SNR.}
\label{fig01}
\end{figure}
{Fig.}~\ref{fig7} {shows the deterioration of estimation caused by receiving frequency offsets. In single target scenario, OMP algorithm performs superior to MUSIC algorithms for both target range and DOA estimation. When the target number increases to $2$, OMP algorithm still outperforms MUSIC algorithm even with a severe performance deterioration. The performance deterioration of MUSIC algorithm caused by the target number increasing is comparatively limited. Moreover, similar to Fig.}~\ref{fig7}, { the introduction of the denoising algorithms, including both ANM and the fourth-order cumulant, do not bring a more precise estimation. Compared with the estimation performance in scenarios with carrier frequency offset, the performance deterioration caused by receiving frequency offset is basically equivalent. From the comparison between Fig.}~\ref{fig6} \subref{fig6a} {and Fig.}~\ref{fig7} \subref{fig7a}{, with the same standard deviation of the frequency offset, the RMSE of range estimation with receiving frequency offset is close to range estimation with transmitting frequency offset. Although the power of noise caused by the receiving frequency offset is higher than that caused by transmitting frequency offset, the colored noise also complicates the estimation.

{To compare the influence of frequency offsets under the scenarios with different strength of additive white noise, we set the SNR as $0$dB and $10$dB and show the range estimation performance in Fig.}~\ref{fig01}. {From the comparison among Fig.}~\ref{fig01} \subref{fig01a}{, Fig.}~\ref{fig01} \subref{fig01b} {and Fig.}~\ref{fig7} \subref{fig7a}{, the slope of the RMSE curves decrease with a lower SNR, which indicates a weaker influence of receiving frequency offset. Since each RMSE curve in the above figures represents the estimation performance with a varying strength of receiving frequency offset and a fixed strength of additive white noise, it is reasonable that the influence of frequency offset is comparatively decreased with a stronger additive white noise.}

\begin{figure}[h]
\subfloat[Range estimation]{
\label{fig8a}
\includegraphics[width=3.6in]{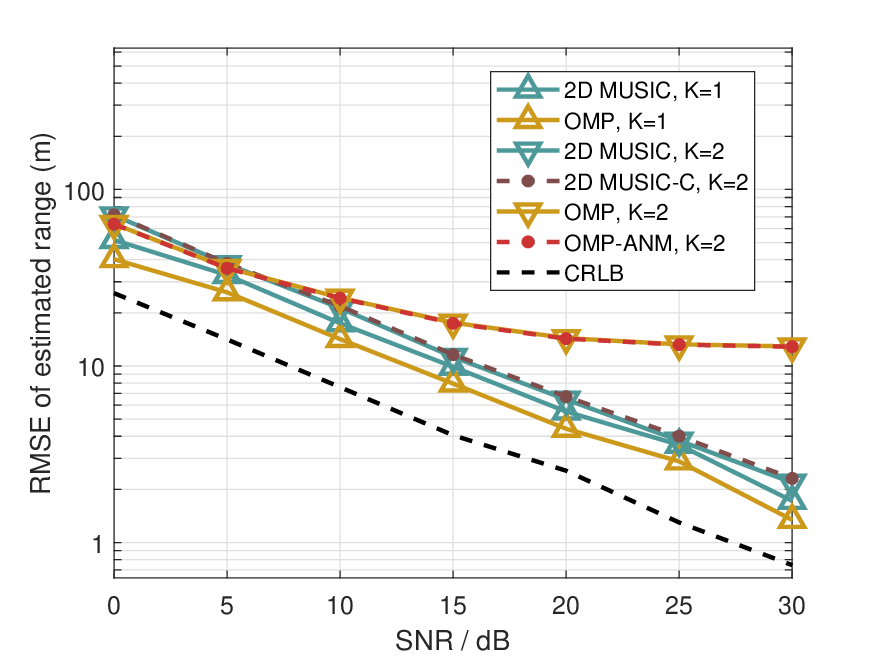}}
\quad
\subfloat[DOA estimation]{
\label{fig8b}  
\includegraphics[width=3.6 in]{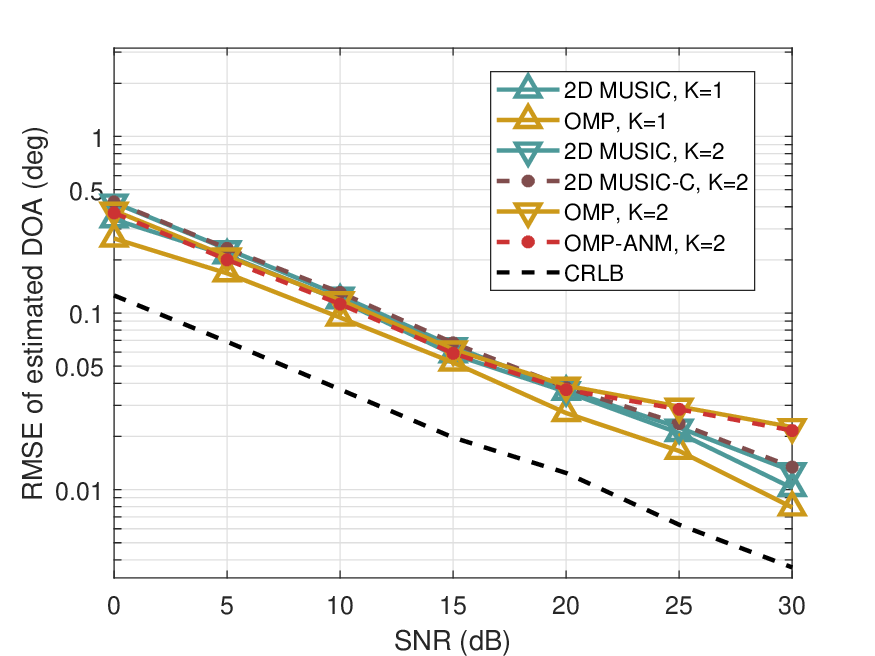}}
\caption{Influence of SNR on estimation.}
\label{fig8}
\end{figure}
{In general, our analysis reveals that algorithms based on OMP algorithm generally exhibit superior performance in single-target scenarios with transmitting frequency offsets. For multiple-target scenarios involving transmitting frequency offsets, MUSIC-R algorithm performs the best due to its highest target DOA estimation accuracy at the lowest computational complexity, which indicates that using only the phase differences among rows in the scenario is more effective compared with utilizing all phase information to obtain the DOA estimation. Besides, 2D MUSIC algorithm demonstrates the best range estimation performance in these scenarios.}
		
{In situations involving receiving frequency offsets, the OMP algorithm continues to excel in single-target scenarios, while the 2D-MUSIC algorithm proves to be the most effective in multiple-target scenarios. Moreover, it is noteworthy that the introduction of the ANM algorithm and the fourth-order cumulant, while offering marginal performance improvements, results in significantly higher computational complexity in scenarios with any type of frequency offsets.}

{We then show the relation between RMSE of the estimation and SNR in Fig.}~\ref{fig8}{. SNR here represents the ratio between the signal power and the white noise power, where the noise caused by frequency offsets is not included. The range and angle of the target are unchanged, and the frequency offsets are not attached. The SNR is defined at the matched filter outputs, i.e., the ratio of the power of the signal in matrix $\boldsymbol{Y}$ and the power of the noise in matrix $\boldsymbol{Y}$ in Eq.(}\ref{eq16}{). From the comparison between Fig.}~\ref{fig7} { and Fig.}~\ref{fig8}{, we can find that the deterioration caused by receiving frequency offset is basically in accordance with the analysis in Fig.}~\ref{fig4}{. For example, when the standard deviation of the receiving frequency offset $\sigma_r = 0.04 \Delta f$, the equalized SNR is about $11dB$ as shown in Fig.}~\ref{fig4}{. According to Fig.}~\ref{fig8}\subref{fig8a}{, the RMSE of estimated range using OMP algorithm in single target scenario are about $16m$ at $\textit{SNR} = 11dB$, and the RMSE of the estimated range is about $15.1m$ at $\sigma_r = 0.04 \Delta f$ as shown in Fig.}~\ref{fig7}{. The RMSE of range are basically the same when $\textit{SNR} = 11dB$ or $\sigma_r = 0.04 \Delta f$, which indicates the consistence between the simulation result and the analysis. Moreover, from the comparison between Fig.}~\ref{fig6} { and Fig.}~\ref{fig8}{, the deterioration on estimation caused by transmitting frequency offset is more severe than the the deterioration at the corresponding equalized SNR. The observation aligns with Proposition.} \ref{prop2}{, which points out that the equalized noise caused by transmitting frequency offset is colored. Compared with common additional white noise, the colored noise cause more severe degradation on estimation performance.  
}

\section{conclusion}\label{sec6}
The transmitting and receiving frequency offsets have been considered in the FDA-MIMO system. The model containing the frequency offsets has been built, and CRLB of estimation of range and angle are obtained. We have derived the equalized noise caused by frequency offsets and have analyzed the characteristics of the noise, together with their influence on the estimation of range and angle. {We have proved that the effect of the transmitting frequency offset is similar to additional colored noise and have utilized different algorithms to confirm the derivation}. Simulation results prove the analysis and numerically present the deterioration caused by frequency offsets. Future work will focus on applying the FDA radar in the ISAC system.

\bibliographystyle{IEEEtran}
\bibliography{ref}
\begin{IEEEbiography}
[{\includegraphics[width=1in,height=1.25in,clip,keepaspectratio]{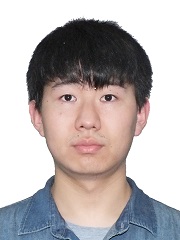}}] 
{Mengjiang Sun (Student Member, IEEE)} was born in Inner Mongolia, China, in 1998. He received the B.E. degree from School of Information Science and Engineering, Southeast University, China in 2021. He is currently pursing the Ph.D degree with the State Key Laboratory of Millimeter Waves, Southeast University, Nanjing, China.
His research interests include radar signal processing and millimeter wave communication.
\end{IEEEbiography}
\vspace{-90 mm} 
\begin{IEEEbiography}
[{\includegraphics[width=1in,height=1.25in,clip,keepaspectratio]{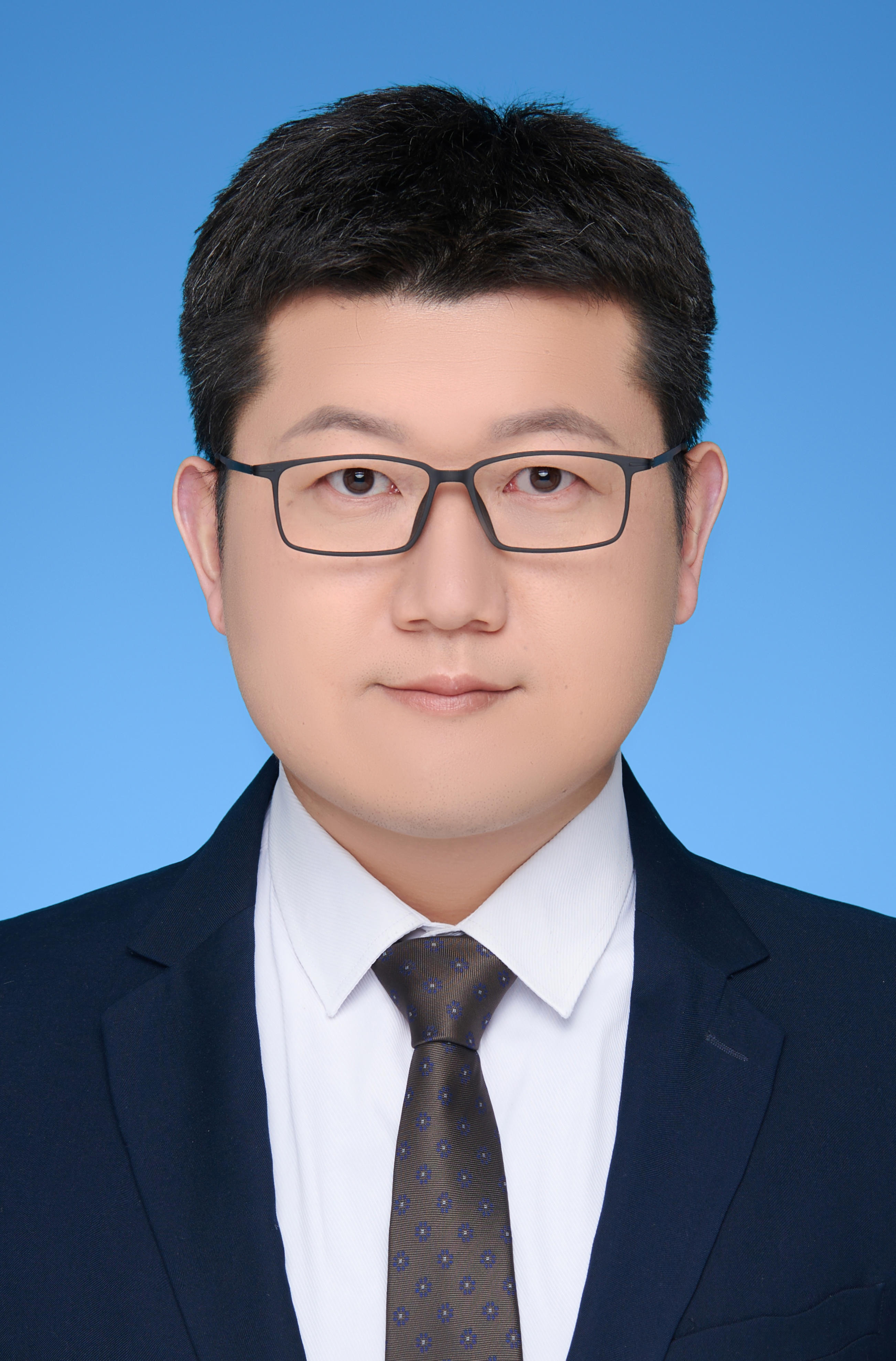}}] 
{Peng Chen (Seinor Member, IEEE)} received the B.E. and Ph.D. degrees from the School of Information Science and Engineering, Southeast University, Nanjing, China, in 2011 and 2017 respectively. From March 2015 to April 2016, he was a Visiting Scholar with the Department of Electrical Engineering, Columbia University, New York, NY, USA. He is currently an Associate Professor with the State Key Laboratory of Millimeter Waves, Southeast University. His research interests include target localization, super-resolution reconstruction, and array signal processing. He is a Jiangsu Province Outstanding Young Scientist. He has served as an IEEE ICCC Session Chair, and won the Best Presentation Award in 2022 (IEEE ICCC). He was invited as a keynote speaker at the IEEE ICET in 2022. He was recognized as an exemplary reviewer for IEEE WCL in 2021, and won the Best Paper Award at IEEE ICCCCEE in 2017.
\end{IEEEbiography}
\vspace{-94 mm} 
\begin{IEEEbiography}[{\includegraphics[width=1in,height=1.25in,clip,keepaspectratio]{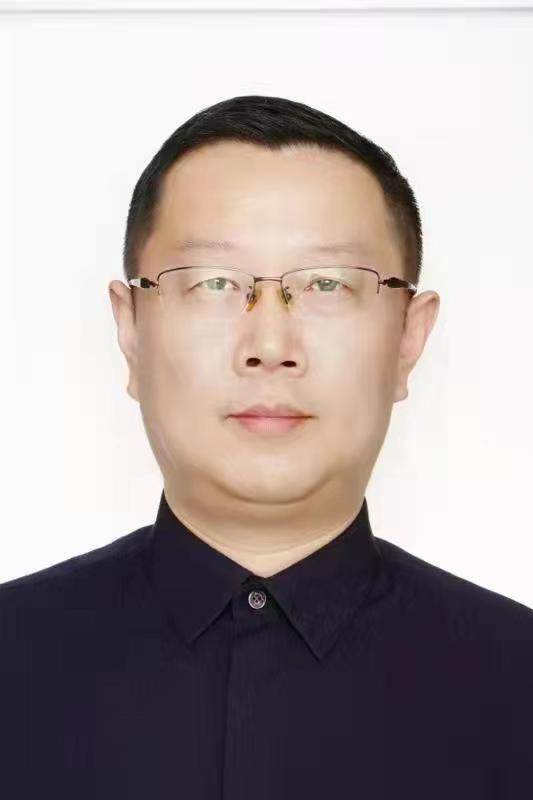}}]{Zhenxin Cao (Member, IEEE)} was born in May 1976. He received the M.S. degree from Nanjing University of Aeronautics and Astronautics, Nanjing, China, in 2002 and the Ph.D. degree from the School of Information Science and Engineering, Southeast University, Nanjing, China, in 2005. From 2012 to 2013, he was a Visiting Scholar with North Carolina State University. Since 2005, he has been with the State Key Laboratory of Millimeter Waves, Southeast University, where he is currently a Professor. His research interests include antenna theory and application.
\end{IEEEbiography}

\end{document}